\title{On the Approximation Ratio of the 3-Opt Algorithm for the (1,2)-TSP}
\author{Xianghui Zhong\\
\small University of Bonn, Germany\\[5mm]
            }
\date{\today} 
\newtheorem{thm}{Theorem}[section]
\theoremstyle{plain}
\newtheorem{lemma}[thm]{Lemma}
\newtheorem{theorem}[thm]{Theorem}
\newtheorem{corollary}[thm]{Corollary}
\theoremstyle{remark}
\theoremstyle{definition}
\newtheorem{definition}[thm]{Definition}
\newcommand*{\N}{\ensuremath{\mathbb{N}}}
\newcommand*{\Z}{\ensuremath{\mathbb{Z}}}
\definecolor{ffqqqq}{rgb}{1.,0.,0.}
\definecolor{uuuuuu}{rgb}{0,0,0}
\begin{document}

\maketitle
\begin{abstract}
The \textsc{(1,2)-TSP} is a special case of the TSP where each edge has cost either 1 or 2. In this paper we give a lower bound of $\frac{3}{2}$ for the approximation ratio of the 2-Opt algorithm for the \textsc{(1,2)-TSP}. Moreover, we show that the 3-Opt algorithm has an exact approximation ratio of $\frac{11}{8}$ for the \textsc{(1,2)-TSP}. Furthermore, we introduce the 3-Opt++-algorithm, an improved version of the 3-Opt algorithm for the \textsc{(1-2)-TSP} with an exact approximation ratio of $\frac{4}{3}$.
\end{abstract}

{\small\textbf{keywords:} traveling salesman problem; \textsc{(1,2)-TSP}; 3-Opt algorithm; approximation algorithm; approximation ratio}

\section{Introduction}
The traveling salesman problem (TSP) is probably the best-known problem in discrete optimization. An instance consists of the pairwise distances of $n$ vertices and the task is to find a shortest Hamiltonian cycle, i.e.\ a tour visiting every vertex exactly once. The problem is known to be NP-hard \cite{TSPNPHARD}. 

In this paper we investigate the \textsc{(1,2)-TSP} which is a special case of the TSP. In this case, the distances are restricted to be 1 or 2. This variant is still NP-hard \cite{TSPNPHARD}. To get good tours in practice several approximation algorithms are considered. They compute in polynomial time a best possible tour. The approximation ratio is one way to compare approximation algorithms. It is the worst-case ratio between the length of the tour computed by the approximation algorithm and that of the optimal tour.

One class of the approximation algorithms are local search algorithms. These algorithms start with an arbitrary tour and make small local changes improving the tour until it is not possible anymore. The most natural local search approximation algorithm is the $k$-Opt algorithm. It starts with an arbitrary tour and replaces at most $k$ edges in every iteration to get a shorter tour. If there is no such improvement anymore, it outputs the tour. 

The currently best approximation ratio for the \textsc({1,2)-TSP} is $\frac{8}{7}$ achieved by two different algorithms by Berman, Karpinski \cite{DBLP:journals/eccc/ECCC-TR05-069} and Adamaszek, Mnich and Paluch \cite{DBLP:conf/icalp/AdamaszekMP18}. The algorithm given by Berman and Karpinski is a local search algorithm based on $k$-Opt. Hence, in contrast to the other TSP variants, for the \textsc{(1,2)-TSP} one of the currently best approximation algorithms in respect of approximation ratio is a local search algorithm. Karpinski and Berman conjectured that their analysis of the algorithm is not tight. Thus, a better understanding of the approximation ratio of local search algorithms could improve the best approximation ratio for the \textsc{(1,2)-TSP}.

The exact approximation ratio of the $k$-Opt algorithm is not known for general $k$ for the \textsc{(1,2)-TSP}. Khanna, Motwani, Sudan and Vazirani gave an upper bound of $\frac{3}{2}$ on the approximation ratio of the 2-Opt algorithm \cite{khanna1998syntactic}. Zhong showed that the appoximation ratio of the $k$-Opt algorithm is at least $\frac{11}{10}$ for any fixed $k$ \cite{zhong2020approximation}. Thus, the exact approximation ratio of the $k$-Opt algorithm is between $\frac{11}{10}$ and $\frac{3}{2}$ for any fixed $k$ which could be better than the currently best approximation ratio of $\frac{8}{7}$ for $k$ large enough. In this paper we make a start by finding the exact approximation ratio for $k=3$. \medskip

\noindent\textbf{New results.}
First, we give a lower bound of $\frac{3}{2}$ on the approximation ratio of the 2-Opt algorithm for the \textsc{(1,2)-TSP}. A matching upper bound of $\frac{3}{2}$ was given in \cite{khanna1998syntactic} and it was noted that this bound can be shown to be tight. Nevertheless, no lower bound was given explicitly. 

Next, we show that the exact approximation ratio of the 3-Opt algorithm is $\frac{11}{8}$ for the \textsc{(1,2)-TSP}.

\begin{theorem} \label{apx ratio of 3-Opt algorithm}
The exact approximation ratio of the 3-Opt algorithm for \textsc{(1,2)-TSP} is $\frac{11}{8}$.
\end{theorem}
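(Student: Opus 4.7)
The plan is to establish the equality $\tfrac{11}{8}$ by proving a matching upper bound and a matching lower bound on the approximation ratio.

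For the upper bound, let $T$ be a $3$-optimal tour and $T^*$ an optimal tour on $n$ vertices. Since every edge costs $1$ or $2$, write $|T|=n+a$ and $|T^*|=n+a^*$ where $a,a^*$ count the cost-$2$ edges. The desired inequality $|T|/|T^*|\le\tfrac{11}{8}$ is equivalent to $8a-11a^*\le 3n$. The extremal case is when $a^*=0$, i.e.\ the cost-$1$ graph contains a Hamilton cycle, so I would first reduce to proving $a\le \tfrac{3}{8}n$ under that assumption and then show how the general bound follows. The main argument is a charging/classification scheme on the cost-$2$ edges of $T$: for each cost-$2$ edge $uv$ of $T$, look at the edges that $T$ and $T^*$ use at $u$ and $v$ and, iterating one or two hops, classify the local configuration. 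Three-optimality forbids any configuration from which a $3$-Opt move would strictly shorten $T$, and I would catalogue the possible reconnection patterns and the local gains they produce. A weighted count over the allowed configurations should yield $a\le\tfrac{3}{8}n$.

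For the lower bound, I would exhibit a family of gadget instances. Each gadget is a constant-size subgraph whose short-edge graph contains a Hamilton cycle (the witness for the optimum) but which also admits a tour that uses the prescribed density of cost-$2$ edges and against which no $3$-Opt move improves. Glueing many copies of this gadget in a cyclic fashion (with a fixed number of connector edges whose cost contribution vanishes in the limit) produces instances of arbitrary size whose $3$-optimal/optimal cost ratio tends to $\tfrac{11}{8}$; with care one can close the construction so the ratio is attained exactly for infinitely many $n$.

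The principal obstacle is the upper bound. Unlike $2$-Opt, where removing a pair of tour edges admits essentially one non-trivial reconnection, removing three edges of a Hamilton tour admits seven non-trivial reconnection patterns, each of which must be ruled out whenever it would yield a shorter tour. Translating this list of forbidden local patterns into the tight global constant $\tfrac{3}{8}$---rather than a weaker fraction somewhere between $\tfrac{1}{2}$ (the $2$-Opt value) and $\tfrac{3}{8}$---requires a careful discharging argument in which each cost-$2$ edge of $T$ distributes a fractional weight to nearby structures and one verifies that the total weight collected on the $n$ vertices of the tour is exactly the desired quantity. Executing this case analysis and confirming the tightness of the charging will be the main technical work; it must also match the gadget construction used for the lower bound, since any slack in the charging would leave a gap.
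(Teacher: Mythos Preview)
Your high-level plan---matching upper and lower bounds, with the upper bound coming from a charging argument that exploits the forbidden local patterns of $3$-optimality---is the right shape, and it is also the shape of the paper's argument. But as written the proposal is a plan, not a proof, and it omits precisely the structural ideas that make the tight constant $\tfrac{11}{8}$ come out.

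For the upper bound, the paper does \emph{not} charge directly from the cost-$2$ edges of $T$. Instead it works with the \emph{$1$-paths} of $T$ (the connected components after deleting all cost-$2$ edges) and distributes counters \emph{to} vertices \emph{from} the endpoints of $1$-paths along edges of the optimal tour $T^*$. Crucially, it distinguishes two kinds of counters: \emph{good} counters (two per edge) coming from length-$0$ $1$-paths, and \emph{bad} counters (one per edge) coming from endpoints of longer $1$-paths. Three-optimality is used to prove several structural lemmas (no cost-$1$ edge joins endpoints of different $1$-paths; vertices with good counters are at distance $\ge 3$ on a $1$-path; endpoints carry at most one bad counter; etc.). The punchline is not a purely local discharging but a small \emph{linear program} over the possible $1$-path lengths, whose dual has a feasible solution of value $\tfrac{12}{5}h$ (with $h$ the number of cost-$1$ edges of $T$); plugging $d=\tfrac{12}{5}$ into a generic ratio lemma yields $1+\tfrac{d}{4+d}=\tfrac{11}{8}$. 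Your proposal does not mention $1$-paths, the good/bad counter dichotomy, or the LP step, and without some analogue of these it is not clear how a vertex-level discharging would reach exactly $\tfrac{3}{8}$ rather than a weaker fraction. The interaction between length-$0$ $1$-paths (which can dump four counters on a single vertex) and longer $1$-paths (which cap bad counters globally) is precisely what the LP balances; a purely local rule is unlikely to capture this.

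For the lower bound, one point of divergence: your plan assumes the cost-$1$ graph of the gadget contains a Hamilton cycle, so that the optimum equals $n$. The paper's construction does not achieve this; its optimal tour has length $8s+4$ on $8s$ vertices (four vertex-disjoint cost-$1$ cycles, patched with four cost-$2$ edges), and the ratio $\tfrac{11s}{8s+4}\to\tfrac{11}{8}$ only in the limit. Insisting on an exact Hamilton cycle of cost-$1$ edges may make it harder to verify $3$-optimality of the bad tour. The paper also sidesteps a large case analysis by formalising a notion of \emph{regular} instance families and reducing the $3$-optimality check to a single finite instance verified by computer; your plan should anticipate that the $3$-optimality verification is the delicate part of the lower bound.
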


The analysis of the 3-Opt algorithm revealed that vertices incident to two edges of cost 2 in the computed tour by the 3-Opt algorithm make the analysis more complicated and increase the approximation ratio. This can be avoided by a small modification. We introduce the $k$-Opt++ algorithm, a slightly modified version of the $k$-Opt algorithm for \textsc{(1,2)-TSP}. We also analyze the exact approximation ratio of the 3-Opt++ algorithm.
\begin{theorem} \label{apx ratio of 3-Opt++ algorithm}
The exact approximation ratio of the 3-Opt++ algorithm for \textsc{(1,2)-TSP} is $\frac{4}{3}$.
\end{theorem}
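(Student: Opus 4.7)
The plan is to establish the theorem in two parts: first the upper bound $|T|/|\mathrm{OPT}| \le 4/3$ for every tour $T$ that is locally optimal under the 3-Opt++ rule, then a matching lower bound by an explicit family of instances. Throughout, write $a$ for the number of cost-2 edges in the 3-Opt++ output tour $T$ and $b$ for the number of cost-2 edges in an optimal tour $\mathrm{OPT}$, so that $|T| = n+a$ and $|\mathrm{OPT}|=n+b$. The target inequality $n+a \le \tfrac{4}{3}(n+b)$ is equivalent to $3a \le n+4b$, so the upper bound reduces to controlling the number of cost-2 edges of $T$.

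For the upper bound, the key structural input from the 3-Opt++ modification is that no vertex in $T$ is incident to two cost-2 edges; hence the cost-2 edges $M_2 \subseteq T$ form a matching, and in particular $2a \le n$. This by itself is not enough, so the plan is to refine the count by charging each edge of $M_2$ to a collection of edges of $\mathrm{OPT}$. For every $e=\{u,v\}\in M_2$, the two $T$-neighbors of $u$ other than $e$ and the two $T$-neighbors of $v$ other than $e$ are endpoints of cost-1 edges of $T$, which gives a rigid local picture around $e$. I would then look at how the cost-1 and cost-2 edges of $\mathrm{OPT}$ enter this local picture and, assuming for contradiction that $3a > n+4b$, exhibit three $T$-edges whose removal and replacement by three edges of smaller total cost yields an improving 3-move compatible with the 3-Opt++ rule. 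The case analysis is naturally organized by the number of $\mathrm{OPT}$-edges incident to the endpoints of a given $e \in M_2$ that are (i) cost-1 and (ii) not in $T$; the amortized counting over all $e\in M_2$ should then deliver $3a \le n + 4b$.

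For the lower bound, the plan is to adapt the construction used for the $\tfrac{11}{8}$ lower bound of Theorem~\ref{apx ratio of 3-Opt algorithm} by removing the gadgets that rely on a vertex having two incident cost-2 edges, since those are exactly what 3-Opt++ forbids. Concretely, I would build a graph on $n=3k$ vertices whose cost-1 edges are an optimal Hamiltonian cycle of length $n$, and exhibit a starting tour $T_0$ with $a=k$ cost-2 edges forming a perfect matching on a distinguished set of $2k$ vertices, such that every 3-swap either strictly increases cost, leaves cost unchanged, or creates a vertex with two incident cost-2 edges and is therefore rejected by the 3-Opt++ rule. Verifying by exhaustive case analysis that no admissible improving 3-move exists yields $|T_0|/|\mathrm{OPT}| = (n+k)/n = 4/3$ in the limit $n\to\infty$, completing the lower bound.

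The main obstacle will be the upper bound case analysis. Local optimality under 3-Opt permits 7 distinct ways of reconnecting after removing three edges, and one must rule out improving moves of all such types simultaneously; the extra 3-Opt++ restriction helps but also complicates the accounting, because some would-be improving moves are rejected not for cost reasons but because they create a forbidden vertex configuration. Handling these rejected moves carefully, so that they can still be used in the charging argument rather than being simply discarded, is the delicate point on which the tightness of the $4/3$ bound ultimately hinges.
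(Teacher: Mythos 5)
Your upper bound rests on a claim that is simply false: 3-Opt++-optimality does \emph{not} imply that no vertex of $T$ is incident to two cost-2 edges. The ++ rule only accepts a cost-neutral 3-move when it strictly decreases the number of 1-paths of length 0; it does not guarantee that such 1-paths can be eliminated. Indeed they often cannot: a vertex all of whose incident edges have cost 2 is a 1-path of length 0 in \emph{every} tour, including the optimum, so your ``matching'' property $2a\le n$ fails outright. The paper's analysis is built around exactly this phenomenon --- good counters in Definition~\ref{counter distribution} are distributed \emph{by} surviving 1-paths of length 0, and the gain from the ++ rule is only the local statement of Lemma~\ref{3Opt Single} (a 1-path receiving a good counter has exactly two edges), which then caps each 1-path with $x$ edges at $2x$ counters (Lemma~\ref{3Opt 1-paths points}) and gives $4/3$ via Lemma~\ref{3 Opt ratio bound}. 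Your plan has no substitute for this: once the matching claim is withdrawn, the proposal contains no actual charging argument, only the assertion that a case analysis ``should deliver'' $3a\le n+4b$; and even if the matching claim were true it would only give $2a\le n$, i.e.\ ratio $3/2$, so the decisive amortized step is missing in any case. The same misreading of the rule reappears in your lower bound (``creates a vertex with two incident cost-2 edges and is therefore rejected''): the algorithm never rejects a move for creating such vertices, it merely requires cost-neutral moves to reduce their number, which is why the paper can choose an instance with no length-0 1-paths at all and reduce 3-Opt++-optimality to plain 3-optimality there.

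On the lower bound, your target parameters ($n=3k$ vertices, $k$ cost-2 tour edges forming a matching, optimum consisting solely of cost-1 edges) do match the shape of the paper's construction ($6s$ vertices, $2s$ cost-2 tour edges, optimum of cost $6s$), but you give no explicit instance and no mechanism for certifying 3-optimality of an infinite family; the paper handles this with the regularity reduction (Lemma~\ref{regular k optimality}), which collapses the verification to one finite instance checked by computer. As it stands, both halves of your argument are plans rather than proofs, and the upper-bound plan is anchored to a structural property that 3-Opt++ does not provide.
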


\noindent\textbf{Outline of the Paper.}
First, we give a lower bound for the approximation ratio of the 2-Opt algorithm of $\frac{3}{2}$ in Section \ref{sec 2 opt lower}. Then, we show in Section \ref{sec 3 opt} that the approximation ratio of the 3-Opt algorithm is $\frac{11}{8}$.
In Section \ref{sec 3 opt ++} we introduce the $k$-Opt++ algorithm, a slightly modified version of the $k$-Opt algorithm, for the \textsc{(1,2)-TSP}. We prove that the approximation ratio of the 3-Opt++ algorithm is $\frac{4}{3}$. 

\section{Lower Bound on the Approximation Ratio of the 2-Opt Algorithm} \label{sec 2 opt lower}
In this section we give a lower bound of $\frac{3}{2}$ on the approximation ratio of the 2-Opt algorithm for \textsc{(1,2)-TSP}. Note that in \cite{khanna1998syntactic} a matching upper bound of $\frac{3}{2}$ was given and it was noted that this bound can be proven to be tight. Nevertheless, an explicit construction for the lower bound was not given. The construction is based on the construction for a lower bound of $2\left(1-\frac{1}{n}\right)$ on the approximation ratio of the $k$-Opt algorithm for \textsc{Metric TSP} in \cite{rosenkrantz1977analysis}. 

We construct an instance with $n$ vertices $\{v_1,\dots,v_n\}$ and a 2-optimal tour $T$. For the instance set the cost of the edges of $\{\{v_i,v_{i+2}\}\vert i\in \{1,\dots,n-2\},i \text{ odd}\} \cup \{\{v_i,v_{i+1} \}\vert i\in \{1,\dots,n-1\}\}$ and $\{v_n,v_1\}$ to 1 and the cost of all other edges to 2. The tour $T$ consists of the edges $\{v_1,v_2\}, \{v_{n-1},v_n\}$ and $\{\{v_i, v_{i+2}\}\vert i\in \{1,\dots, n-2\}\}$ (Figure \ref{construction 12TSP}). 

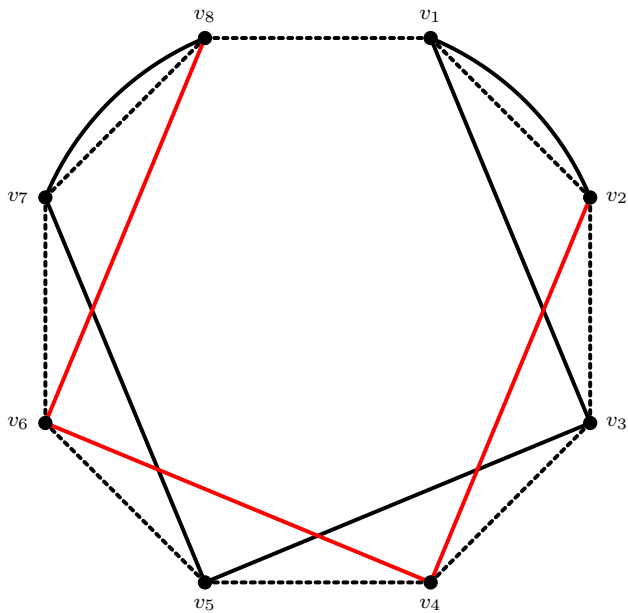
\begin{figure}[!htb]
\centering
 \begin{tikzpicture}[line cap=round,line join=round,>=triangle 45,x=1.5cm,y=1.5cm]
\draw [line width=1.5pt,dotted] (4.,1.8284271247461903)-- (6.,1.8284271247461898);
\draw [line width=1.5pt,dotted] (6.,1.8284271247461898)-- (7.414213562373095,0.4142135623730946);
\draw [line width=1.5pt,dotted] (7.414213562373095,0.4142135623730946)-- (7.414213562373095,-1.585786437626905);
\draw [line width=1.5pt,dotted] (7.414213562373095,-1.585786437626905)-- (6.,-3.);
\draw [line width=1.5pt,dotted] (6.,-3.)-- (4.,-3.);
\draw [line width=1.5pt,dotted] (4.,-3.)-- (2.5857864376269046,-1.5857864376269044);
\draw [line width=1.5pt,dotted] (2.5857864376269046,-1.5857864376269044)-- (2.585786437626905,0.41421356237309537);
\draw [line width=1.5pt,dotted] (2.585786437626905,0.41421356237309537)-- (4.,1.8284271247461903);
\draw [line width=1.5pt,color=uuuuuu] (6.,1.8284271247461898)-- (7.414213562373095,-1.585786437626905);
\draw [line width=1.5pt,color=uuuuuu] (7.414213562373095,-1.585786437626905)-- (4.,-3.);
\draw [line width=1.5pt,color=uuuuuu] (4.,-3.)-- (2.585786437626905,0.41421356237309537);
\draw [line width=1.5pt,color=ffqqqq] (4.,1.8284271247461903)-- (2.5857864376269046,-1.5857864376269044);
\draw [line width=1.5pt,color=ffqqqq] (2.5857864376269046,-1.5857864376269044)-- (6.,-3.);
\draw [line width=1.5pt,color=ffqqqq] (6.,-3.)-- (7.414213562373095,0.4142135623730946);
\draw [shift={(5.,-0.5857864376269051)},line width=1.5pt]  plot[domain=1.963495408493621:2.748893571891069,variable=\t]({1.*2.6131259297527536*cos(\t r)+0.*2.6131259297527536*sin(\t r)},{0.*2.6131259297527536*cos(\t r)+1.*2.6131259297527536*sin(\t r)});
\draw [shift={(5.,-0.5857864376269051)},line width=1.5pt]  plot[domain=0.39269908169872403:1.1780972450961724,variable=\t]({1.*2.6131259297527527*cos(\t r)+0.*2.6131259297527527*sin(\t r)},{0.*2.6131259297527527*cos(\t r)+1.*2.6131259297527527*sin(\t r)});
\begin{scriptsize}
\draw [fill=black] (4.,-3.) circle (2.5pt);
\node[label=below:$v_5$] at (4.,-3.) {};
\draw [fill=black] (6.,-3.) circle (2.5pt);
\node[label=below:$v_4$] at (6.,-3.) {};
\draw [fill=uuuuuu] (7.414213562373095,-1.585786437626905) circle (2.5pt);
\node[label=right:$v_3$] at (7.414213562373095,-1.585786437626905) {};
\draw [fill=uuuuuu] (7.414213562373095,0.4142135623730946) circle (2.5pt);
\node[label=right:$v_2$] at (7.414213562373095,0.4142135623730946) {};
\draw [fill=uuuuuu] (6.,1.8284271247461898) circle (2.5pt);
\node[label=above:$v_1$] at (6.,1.8284271247461898) {};
\draw [fill=uuuuuu] (4.,1.8284271247461903) circle (2.5pt);
\node[label=above:$v_8$] at (4.,1.8284271247461903) {};
\draw [fill=uuuuuu] (2.585786437626905,0.41421356237309537) circle (2.5pt);
\node[label=left:$v_7$] at (2.585786437626905,0.41421356237309537) {};
\draw [fill=uuuuuu] (2.5857864376269046,-1.5857864376269044) circle (2.5pt);
\node[label=left:$v_6$] at (2.5857864376269046,-1.5857864376269044) {};
\end{scriptsize}
\end{tikzpicture}
  \caption{The constructed tour for $n=8$. The black and red edges have cost 1 and 2, respectively. $T$ is the straight tour, the optimal tour is dotted.}
  \label{construction 12TSP}
\end{figure}

The number of edges with cost 2 in $T$ is $\lfloor\frac{n-2}{2} \rfloor$. Thus, $T$ has total length $n+\lfloor\frac{n-2}{2}\rfloor$. The optimal tour for this instance has length $n$ since the tour visiting $v_1,v_2,\dots, v_n$ in this order has length $n$ and is hence optimal. Therefore, the approximation ratio is at least $\lim_{n\to \infty}\frac{n+\lfloor\frac{n-2}{2} \rfloor}{n}=\frac{3}{2}$. It remains to show that $T$ is indeed 2-optimal for large $n$.

\begin{lemma} \label{2Opt 1-2 T optimal}
The tour $T$ constructed above is 2-optimal for $n\geq 7$.
\end{lemma}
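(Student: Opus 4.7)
The plan is to enumerate all 2-opt moves on $T$ and verify that none strictly decreases the total length. The key structural observation I would emphasize is that every cost-$2$ edge of $T$ is an ``even jump'' $\{v_i,v_{i+2}\}$ with $i$ even, so both of its endpoints are even-indexed, and that the underlying graph contains no cost-$1$ edge between two even-indexed vertices: the cost-$1$ edges are the consecutive pairs $\{v_i,v_{i+1}\}$, the wrap-around $\{v_n,v_1\}$, and the odd jumps $\{v_i,v_{i+2}\}$ with $i$ odd, and each of these has at least one endpoint with odd index. I would also record the short catalogue of cost-$1$ neighbours of every vertex: an odd $v_i$ with $3\le i\le n-2$ has the four cost-$1$ neighbours $v_{i-2},v_{i-1},v_{i+1},v_{i+2}$; an even $v_i$ has only the two consecutive cost-$1$ neighbours $v_{i-1},v_{i+1}$; and $v_1,v_n$ gain the additional cost-$1$ neighbour from the wrap-around edge.

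I then split on the costs of the two removed edges $e_1,e_2$. If both have cost $2$, all four endpoints are even-indexed, so both new edges of the (uniquely determined) 2-opt swap lie between even-indexed vertices and have cost $2$; the length is unchanged. If both have cost $1$, the removed sum is already the minimum $2$, so no improvement is possible. The substantive case is $e_1=\{v_p,v_{p+2}\}$ an even jump and $e_2$ a cost-$1$ edge of $T$, where any improvement forces both added edges to have cost $1$. I would subdivide by the type of $e_2$: either $\{v_1,v_2\}$, $\{v_{n-1},v_n\}$, or an odd jump $\{v_k,v_{k+2}\}$ with $k$ odd. In each sub-case the cyclic order of $T$ pins down a unique pair of new edges (the only other way to reconnect the two remaining paths into a Hamilton cycle reproduces $T$), and the catalogue of cost-$1$ neighbours lets me check directly that at most one of these new edges can have cost $1$. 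For the odd-jump sub-case, for example, the 2-opt adds $\{v_p,v_{k+2}\}$ and $\{v_{p+2},v_k\}$; requiring both to be cost $1$ forces $v_p\in\{v_{k+1},v_{k+3}\}$ and $v_{p+2}\in\{v_{k-1},v_{k+1}\}$, and since $v_{p+2}=v_p+2$ a direct check rules out every combination.

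The main obstacle is careful bookkeeping rather than any conceptual subtlety. I would need to distinguish the parities of $n$, because the parity of $v_n$ and the presence in $T$ of the odd jump $\{v_{n-2},v_n\}$ depend on this; and I would need to handle separately the boundary sub-cases in which $e_1$ or $e_2$ is incident to one of the special vertices $v_1,v_2,v_{n-1},v_n$, as well as rule out the invalid pairs in which $e_1$ and $e_2$ share a vertex. The hypothesis $n\ge 7$ enters precisely to guarantee that these boundary coincidences do not degenerate the case analysis—for instance it ensures that genuinely non-adjacent pairs of the required types exist and that indices such as $n-5$ remain distinct from $1,2,3$—so that the enumeration above is exhaustive.
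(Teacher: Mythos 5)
Your proposal is correct and follows essentially the same route as the paper's proof: a direct case analysis over all 2-moves according to the costs of the two removed edges, using the uniquely determined 2-opt reconnection and the structure of the cost-1 edges (the paper phrases the mixed case via index differences being at least three, you via a cost-1 neighbour catalogue). Your parity observation that no cost-1 edge joins two even-indexed vertices is a slightly cleaner way to dispose of the case of two removed cost-2 edges, but the overall argument is the same, and the boundary/wrap-around checks you defer are exactly the routine ones the paper also handles.
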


\begin{proof}
Assume that there exists an improving 2-move. Then, this 2-move replaces at least an edge of length two. Fix an orientation of $T$ such that the tour edge $(v_2,v_4)$ is oriented this way. It is easy to see that all edges with cost two of the form $(v_i,v_{i+2})$ with $i$ even are oriented this way while for $i$ odd the edges are oriented as $(v_{i+2},v_i)$. 

Assume that the improving 2-move replaces two edges of cost two. Then, these edges have by the definition of $T$ the form $(v_i,v_{i+2})$ and $(v_j,v_{j+2})$ for even $i,j\in\{2,\dots, n-2\}, i\neq j$. According to the fixed orientation the 2-move replaces $(v_i,v_{i+2})$ and $(v_j,v_{j+2})$ by $\{v_i,v_j\}$ and $\{v_{i+2},v_{j+2}\}$. Since $i\neq j$ both even, the edges $\{v_i,v_j\}$ and $\{v_{i+2},v_{j+2}\}$ both have cost two. Thus, this 2-move is not improving contradicting the assumption.

It remains the case that the improving 2-move replaces an edge $(v_i,v_l)$ of cost 1 and an edge $(v_j,v_{j+2})$ of cost 2. Then the new edges $\{v_i,v_j\}$ and $\{v_{l},v_{j+2}\}$ both must have length 1. We distinguish two cases: either $l=i-2$ with $i$ odd or $\{v_i,v_{l}\}=\{v_h,v_{h+1}\}$ for some $h\in\{1, n-1\}$. In the first case the difference of the indices of at least one of the new edges $\{v_i,v_j\}$ or $\{v_{i-2},v_{j+2}\}$ has to be at least three. Hence, at least one new edge has cost 2, contradicting the assumption that we have an improving 2-move. In the second case note that the vertices of the edges $\{v_h,v_{h+1}\}$ and $\{v_j,v_{j+2}\}$ are disjoint, otherwise we do not get a tour after the 2-move. Hence, the difference of the indices of at least one new edge has to be at least three and the total cost of the new edges cannot be 2.
\end{proof}

\begin{theorem}
The approximation ratio of the 2-Opt algorithm for \textsc{(1,2)-TSP} is at least $\frac{3}{2}$.
\end{theorem}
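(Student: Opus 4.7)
The plan is to assemble the pieces already developed in this section into a formal argument. The construction preceding Lemma \ref{2Opt 1-2 T optimal} produces, for every $n\geq 7$, an explicit \textsc{(1,2)-TSP} instance together with a tour $T$ of cost $n+\lfloor(n-2)/2\rfloor$; by Lemma \ref{2Opt 1-2 T optimal} this tour is 2-optimal, so 2-Opt can legitimately terminate on $T$ when started from it. I would therefore begin the proof by fixing $n\geq 7$ and recalling the instance and the tour $T$ verbatim.

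Next, I would exhibit an optimal tour of cost $n$, namely the Hamiltonian cycle $v_1,v_2,\dots,v_n,v_1$. Every consecutive pair $(v_i,v_{i+1})$ and the closing edge $(v_n,v_1)$ belong to the cost-1 edge set by the definition of the instance, so this tour has total length $n$, and no tour on $n$ vertices can be cheaper because every tour uses exactly $n$ edges of cost at least $1$. This identifies $n$ as the optimum.

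Then the approximation ratio achievable by 2-Opt on this family is at least
\[
\frac{n+\lfloor (n-2)/2\rfloor}{n},
\]
and letting $n\to\infty$ (through, say, even $n$ to avoid the floor) this quantity tends to $3/2$. Since 2-Opt is allowed to output any 2-optimal tour, this witnesses a lower bound of $3/2$ on its approximation ratio, concluding the proof.

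There is no real obstacle: all technical content is already in Lemma \ref{2Opt 1-2 T optimal}, and what remains is a short bookkeeping argument about costs and a limit. The only care needed is to phrase the bound as a supremum/limit over $n$, so that the floor term does not cause issues for small $n$; taking the limit along even $n$ makes this transparent.
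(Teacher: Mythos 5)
Your proposal matches the paper's proof: it invokes the constructed instance with the 2-optimal tour $T$ from Lemma \ref{2Opt 1-2 T optimal}, compares its length $n+\lfloor\frac{n-2}{2}\rfloor$ with the optimal tour of length $n$ (the cycle $v_1,\dots,v_n$, which uses only cost-1 edges), and takes the limit $n\to\infty$ to obtain the lower bound $\frac{3}{2}$. The extra remarks about the trivial lower bound of $n$ on any tour and taking the limit along even $n$ are harmless refinements of the same argument.
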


\begin{proof}
We have constructed an instance with a tour $T$ which is 2-optimal by Lemma~\ref{2Opt 1-2 T optimal}. Recall that the length of $T$ is $n+\lfloor \frac{n-2}{2}\rfloor$ while the length of the optimal tour is $n$. Hence, the approximation ratio is at least $\lim_{n\to \infty} \frac{n+\lfloor \frac{n-2}{2} \rfloor}{n}=\frac{3}{2}$.
\end{proof}

\section{Approximation Ratio of the 3-Opt Algorithm} \label{sec 3 opt}
In this section we show that the exact approximation ratio of the 3-Opt algorithm is $\frac{11}{8}$.

\subsection{Lower Bound on the Approximation Ratio of the 3-Opt Algorithm}
We construct for all integer $s\geq 3$ an instance $I_{s}$ with the vertices $\{v_0,\dots, v_{8s-1}\}$ together with a tour $T$. We show that for all even $s\geq 12$ the construced tour $T$ for the instance $I_{s}$ is 3-optimal. Moreover, the ratio of its length and that of the optimal tour is at least $\frac{11s}{8s+4}$.

For simplicity we consider from now on all indicies modulo $8s$ for some fixed $s$. Set the cost of the edges $\{\{v_{8h},v_{8h+1}\}$, $\{v_{8h+1},v_{8h+2}\}$, $\{v_{8h+2}, v_{8h+3}\}$, $\{v_{8h+3},v_{8h+4}\}$, $\{v_{8h+4},v_{8h+5}\}$, $\{v_{8h+2},v_{8h+5}\}$, $\{v_{8h+2},v_{8(h+1)+5}\}$, $\{v_{8h+3},v_{8h}\}$, $\{v_{8h+3},v_{8(h-1)}\}$, \\$\{v_{8h+4},v_{8h+6}\}$, $\{v_{8h+4},v_{8(h+1)+6}\}$, $\{v_{8h+7},v_{8(h+1)+1}\}$, $\{v_{8h+7},v_{8(h+2)+1}\} \vert h\in \Z\}$ to 1 and the cost of all other edges to 2 (Figure \ref{3-Opt lower}).

The tour $T$ consists of the edges $\{\{v_i,v_{i+1}\} \vert i\in \Z \}$. From the construction it is easy to see that the cost of the tour $T$ is $11s$. Next, we bound the length of the optimal tour. 

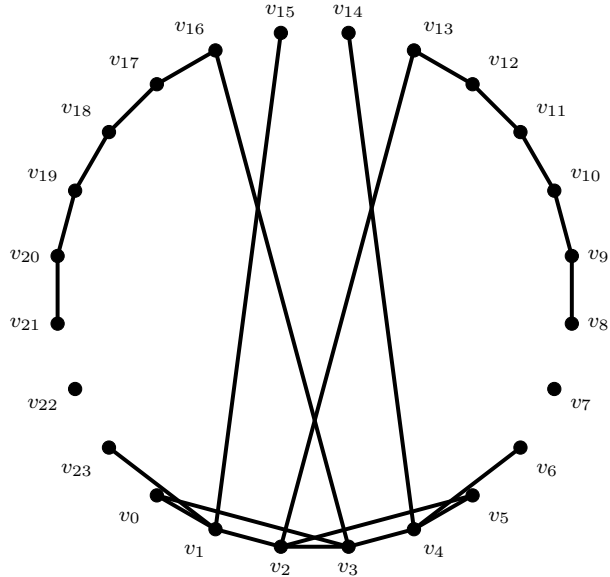
\begin{figure}[!htb]
\centering
\begin{tikzpicture}[line cap=round,line join=round,>=triangle 45,x=0.9cm,y=0.9cm]
\draw [line width=1.5pt] (-0.8919512300735062,1.7788190451025212)-- (-0.025925826289070253,1.2788190451025216);
\draw [line width=1.5pt] (-0.025925826289070253,1.2788190451025216)-- (0.94,1.02);
\draw [line width=1.5pt] (0.94,1.02)-- (1.94,1.02);
\draw [line width=1.5pt] (1.94,1.02)-- (2.9059258262890655,1.2788190451025199);
\draw [line width=1.5pt] (2.9059258262890655,1.2788190451025199)-- (3.7719512300735034,1.7788190451025194);
\draw [line width=1.5pt] (5.237877056362571,4.317877056362571)-- (5.237877056362571,5.317877056362569);
\draw [line width=1.5pt] (5.237877056362571,5.317877056362569)-- (4.979058011260051,6.283802882651636);
\draw [line width=1.5pt] (4.979058011260051,6.283802882651636)-- (4.479058011260051,7.149828286436074);
\draw [line width=1.5pt] (4.479058011260051,7.149828286436074)-- (3.771951230073506,7.85693506762262);
\draw [line width=1.5pt] (3.771951230073506,7.85693506762262)-- (2.905925826289068,8.35693506762262);
\draw [line width=1.5pt] (-0.025925826289064924,8.356935067622622)-- (-0.8919512300735022,7.856935067622624);
\draw [line width=1.5pt] (-0.8919512300735022,7.856935067622624)-- (-1.5990580112600496,7.149828286436078);
\draw [line width=1.5pt] (-1.5990580112600496,7.149828286436078)-- (-2.099058011260049,6.28380288265164);
\draw [line width=1.5pt] (-2.099058011260049,6.28380288265164)-- (-2.357877056362571,5.317877056362572);
\draw [line width=1.5pt] (-2.357877056362571,5.317877056362572)-- (-2.3578770563625713,4.3178770563625735);
\draw [line width=1.5pt] (-0.8919512300735062,1.7788190451025212)-- (1.94,1.02);
\draw [line width=1.5pt] (0.94,1.02)-- (3.7719512300735034,1.7788190451025194);
\draw [line width=1.5pt] (-0.025925826289070253,1.2788190451025216)-- (-1.5990580112600519,2.4859258262890678);
\draw [line width=1.5pt] (-0.025925826289070253,1.2788190451025216)-- (0.94,8.615754112725142);
\draw [line width=1.5pt] (2.9059258262890655,1.2788190451025199)-- (4.479058011260051,2.4859258262890656);
\draw [line width=1.5pt] (2.9059258262890655,1.2788190451025199)-- (1.94,8.615754112725142);
\draw [line width=1.5pt] (0.94,1.02)-- (2.905925826289068,8.35693506762262);
\draw [line width=1.5pt] (1.94,1.02)-- (-0.025925826289064924,8.356935067622622);
\begin{scriptsize}
\draw [fill=black] (0.94,1.02) circle (2.5pt);
\node[label=below:$v_2$] at (0.94,1.02) {};
\draw [fill=black] (1.94,1.02) circle (2.5pt);
\node[label=below:$v_3$] at (1.94,1.02) {};
\draw [fill=uuuuuu] (2.9059258262890655,1.2788190451025199) circle (2.5pt);
\node[label=280:$v_4$] at (2.9059258262890655,1.2788190451025199) {};
\draw [fill=uuuuuu] (3.7719512300735034,1.7788190451025194) circle (2.5pt);
\node[label=315:$v_5$] at (3.7719512300735034,1.7788190451025194) {};
\draw [fill=uuuuuu] (4.479058011260051,2.4859258262890656) circle (2.5pt);
\node[label=315:$v_6$] at (4.479058011260051,2.4859258262890656) {};
\draw [fill=uuuuuu] (4.97905801126005,3.351951230073503) circle (2.5pt);
\node[label=350:$v_7$] at (4.97905801126005,3.351951230073503) {};
\draw [fill=uuuuuu] (5.237877056362571,4.317877056362571) circle (2.5pt);
\node[label=right:$v_8$] at (5.237877056362571,4.317877056362571) {};
\draw [fill=uuuuuu] (5.237877056362571,5.317877056362569) circle (2.5pt);
\node[label=right:$v_{9}$] at (5.237877056362571,5.317877056362569) {};
\draw [fill=uuuuuu] (4.979058011260051,6.283802882651636) circle (2.5pt);
\node[label=10:$v_{10}$] at (4.979058011260051,6.283802882651636) {};
\draw [fill=uuuuuu] (4.479058011260051,7.149828286436074) circle (2.5pt);
\node[label=45:$v_{11}$] at (4.479058011260051,7.149828286436074) {};
\draw [fill=uuuuuu] (3.771951230073506,7.85693506762262) circle (2.5pt);
\node[label=45:$v_{12}$] at (3.771951230073506,7.85693506762262) {};
\draw [fill=uuuuuu] (2.905925826289068,8.35693506762262) circle (2.5pt);
\node[label=80:$v_{13}$] at (2.905925826289068,8.35693506762262) {};
\draw [fill=uuuuuu] (1.94,8.615754112725142) circle (2.5pt);
\node[label=above:$v_{14}$] at (1.94,8.615754112725142) {};
\draw [fill=uuuuuu] (0.94,8.615754112725142) circle (2.5pt);
\node[label=above:$v_{15}$] at (0.94,8.615754112725142) {};
\draw [fill=uuuuuu] (-0.025925826289064924,8.356935067622622) circle (2.5pt);
\node[label=100:$v_{16}$] at (-0.025925826289064924,8.356935067622622) {};
\draw [fill=uuuuuu] (-0.8919512300735022,7.856935067622624) circle (2.5pt);
\node[label=135:$v_{17}$] at (-0.8919512300735022,7.856935067622624) {};
\draw [fill=uuuuuu] (-1.5990580112600496,7.149828286436078) circle (2.5pt);
\node[label=135:$v_{18}$] at (-1.5990580112600496,7.149828286436078) {};
\draw [fill=uuuuuu] (-2.099058011260049,6.28380288265164) circle (2.5pt);
\node[label=170:$v_{19}$] at (-2.099058011260049,6.28380288265164) {};
\draw [fill=uuuuuu] (-2.357877056362571,5.317877056362572) circle (2.5pt);
\node[label=left:$v_{20}$] at (-2.357877056362571,5.317877056362572) {};
\draw [fill=uuuuuu] (-2.3578770563625713,4.3178770563625735) circle (2.5pt);
\node[label=left:$v_{21}$] at (-2.3578770563625713,4.3178770563625735) {};
\draw [fill=uuuuuu] (-2.099058011260052,3.3519512300735075) circle (2.5pt);
\node[label=190:$v_{22}$] at (-2.099058011260052,3.3519512300735075) {};
\draw [fill=uuuuuu] (-1.5990580112600519,2.4859258262890678) circle (2.5pt);
\node[label=225:$v_{23}$] at (-1.5990580112600519,2.4859258262890678) {};
\draw [fill=uuuuuu] (-0.8919512300735062,1.7788190451025212) circle (2.5pt);
\node[label=225:$v_0$] at (-0.8919512300735062,1.7788190451025212) {};
\draw [fill=uuuuuu] (-0.025925826289070253,1.2788190451025216) circle (2.5pt);
\node[label=260:$v_1$] at (-0.025925826289070253,1.2788190451025216) {};
\end{scriptsize}
\end{tikzpicture}
  \caption{The instance $I_3$ where the black edges have cost 1. Due to clarity, not all edges with cost 1 are drawn. The drawn pattern of edges with cost 1 repeats periodically. The tour $T$ connects adjacent vertices on the circle.}
  \label{3-Opt lower}
\end{figure}

\begin{lemma} \label{3 Opt length optimal tour}
The optimal tour $T^*$ for $I_s$ has length at most $8s+4$.
\end{lemma}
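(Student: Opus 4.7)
The plan is to exhibit an explicit Hamiltonian tour on $I_s$ whose length is at most $8s+4$. I will first identify a convenient $2$-regular spanning subgraph $H$ of the cost-$1$ graph of $I_s$, which will turn out to decompose into four cycles, and then merge these cycles into a single Hamiltonian cycle using a small number of edge swaps, each introducing only one cost-$2$ edge.

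For $H$, take all cost-$1$ edges of $I_s$ that are not of the form $\{v_i,v_{i+1}\}$; that is, all the ``shortcut'' cost-$1$ edges listed in the construction. A block-by-block incidence count shows that every vertex of $I_s$ is incident to exactly two edges of $H$, so $H$ is $2$-regular. Tracing the resulting cycles through the edge families, $H$ decomposes into four disjoint cycles of length $2s$: one on the vertex set $\{v_{8h},v_{8h+3}\}_{h}$ using the edges $\{v_{8h+3},v_{8h}\}$ and $\{v_{8h+3},v_{8(h-1)}\}$, one on $\{v_{8h+2},v_{8h+5}\}_{h}$ using $\{v_{8h+2},v_{8h+5}\}$ and $\{v_{8h+2},v_{8(h+1)+5}\}$, one on $\{v_{8h+4},v_{8h+6}\}_{h}$ using the analogous pair, and one on $\{v_{8h+1},v_{8h+7}\}_{h}$ using the two edges incident to $v_{8h+7}$. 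Together they use $8s$ cost-$1$ edges but form four components instead of one.

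To stitch the four cycles into a single Hamiltonian tour I would perform three ``bridge swaps'' in succession. Each swap proceeds as follows: pick a cost-$1$ edge of $I_s$ that joins two of the current components (for instance $\{v_2,v_3\}$ to merge the first two cycles, then $\{v_{19},v_{20}\}$ to bring in the third, then $\{v_0,v_1\}$ to bring in the fourth); delete one cycle edge at each endpoint of the bridge from the respective components; insert the bridge; and insert one further edge to re-pair the two newly-created degree-$1$ vertices. The last edge will have cost $2$, as one can verify from the edge list (none of the pairs $\{v_0,v_5\}$, $\{v_{16},v_{22}\}$, $\{v_{11},v_{8s-1}\}$ is a cost-$1$ edge). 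Thus each swap removes two cost-$1$ edges and adds one cost-$1$ plus one cost-$2$ edge, raising the total cost by exactly $1$. After three swaps the configuration is a single Hamiltonian cycle of length $8s+3 \le 8s+4$.

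The main obstacle I expect is the bookkeeping for the three swaps: I must choose the bridges so that edges scheduled for removal in later swaps are not destroyed by earlier ones, confirm that every bridge is indeed a cost-$1$ edge of $I_s$, and check that after each swap the graph remains a single cycle rather than splitting into two. Thanks to the periodic structure of $I_s$ these checks are routine, and the construction goes through uniformly for all $s\ge 3$.
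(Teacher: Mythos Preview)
Your approach is essentially the paper's: both identify the same $2$-regular cost-$1$ spanning subgraph decomposing into the four vertex-disjoint cycles of length $2s$, then merge these into a single Hamiltonian tour. The paper's merging is simpler --- it just deletes one edge from each cycle and joins the four resulting paths arbitrarily by four edges of cost at most $2$, immediately giving $(8s-4)+4\cdot 2=8s+4$ --- so your more careful bridge-swap scheme (which squeezes out $8s+3$) is correct but more work than the stated bound requires.
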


\begin{proof}
Note that the four sets 
\begin{align*}
&\{\{v_{8h+2},v_{8h+5}\}, \{v_{8h+2},v_{8(h+1)+5}\}\vert h\in \Z\},\{\{v_{8h+3},v_{8h}\},\{v_{8h+3},v_{8(h-1)}\} \vert h\in \Z\},\\
&\{\{v_{8h+4},v_{8h+6}\}, \{v_{8h+4},v_{8(h+1)+6}\} \vert h\in \Z\}, \{\{v_{8h+7},v_{8(h+1)+1}\}, \{v_{8h+7},v_{8(h+2)+1}\}\vert h\in \Z\}
\end{align*}
form four vertex disjoint cycles with edges of cost 1 whose union visits every vertex exactly once. We can construct a tour of cost at most $8s+4$ for the instance by removing an arbitrary edge from each cycle and connect the four paths arbitrarily to a tour.
\end{proof}

To show the $3$-optimality of $T$ we could make a big case distinction. But instead, we use the next lemma that allows us to perform a computer-assisted proof.

\begin{definition}
A family of instances $(I'_s)_{s\in \N}$ for \textsc{(1,2)-TSP} is called \emph{regular}, if the following conditions are satisfied:
\begin{itemize}
\item There is an $l\in \N$ such that the vertices of $I'_s$ can be labeled by $v_0,\dots, v_{ls-1}$. In the following we consider the indicies modulo $ls$. We partition the vertices in \emph{segments} such that each segment consists of the vertices $\{v_{hl},\dots,v_{(h+1)l-1}\}$ for some $h\in \Z$.
\item The edge $\{v_i,v_j\}$ has cost 1 if and only if $\{v_{i+l},v_{j+l}\}$ has cost 1.
\item If $v_i$ does not lie in the same segment as any of $v_j, v_{j-l}$ and $v_{j+l}$, then the edge $\{v_i,v_j\}$ has cost 2.
\end{itemize}
\end{definition}

\begin{lemma} \label{regular k optimality}
For a regular family of \textsc{(1,2)-TSP} instances $(I'_s)_{s\in \N}$ let $T'_s$ be the tour for $I'_s$ consisting of the edges $\{v_i,v_{i+1}\}$ for $i\in \Z$. We have if $T'_{2k}$ is $k$-optimal then $T'_s$ is also $k$-optimal for all $s\geq 2k$.
\end{lemma}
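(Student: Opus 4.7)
The plan is to induct on $s$, with the base case $s = 2k$ being exactly the given hypothesis. For the inductive step, I fix $s > 2k$, assume $T'_{s-1}$ is $k$-optimal, and suppose for contradiction that $T'_s$ admits an improving $k$-move that removes $k$ tour edges $E^-$ and adds $k$ new edges $E^+$ with $c(E^+)<c(E^-)$. The $2k$ endpoints of the edges in $E^-$ lie in at most $2k$ of the $s$ segments, so since $s>2k$ there is some segment $h$ that contains no endpoint of any edge in $E^-\cup E^+$. The strategy is to contract this uninvolved segment so as to obtain an improving $k$-move on $T'_{s-1}$, contradicting the inductive hypothesis.

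Concretely, I would define the contraction map $\phi$ by $\phi(v_i)=v_i$ for $i<hl$ and $\phi(v_i)=v_{i-l}$ for $i\geq(h+1)l$, extended to edges whose endpoints avoid segment $h$. Because no edge of $E^-\cup E^+$ touches segment $h$, the $l+1$ tour edges of the subpath $P_h=v_{hl-1}v_{hl}\cdots v_{(h+1)l}$ all survive the $k$-move, so $P_h$ is an intact subpath of $T^{\mathrm{new}}=T'_s\triangle(E^-\cup E^+)$. Contracting $P_h$ in $T^{\mathrm{new}}$ produces a Hamiltonian cycle on the remaining vertices, and a direct check shows this cycle equals $T'_{s-1}\triangle(\phi(E^-)\cup\phi(E^+))$ under the identification given by $\phi$, so $(\phi(E^-),\phi(E^+))$ is at least syntactically a $k$-move on $T'_{s-1}$.

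It then remains to verify that the contracted move is still improving. For removed edges $e\in E^-$, both endpoints lie in segments strictly less than $h$ or strictly greater than $h$, so $\phi$ either acts as the identity or as a shift by $-l$; regularity condition~2 combined with the implicit consistency of edge costs across a regular family then gives $c(\phi(e))=c(e)$. For added edges $e\in E^+$, regularity condition~3 forces every cost-$1$ edge to connect vertices whose segments are equal or adjacent; coupled with neither endpoint of $e$ lying in segment $h$, this rules out the dangerous case of endpoints in segments $h-1$ and $h+1$, and in every remaining case periodicity gives $c(\phi(e))=c(e)=1$, while cost-$2$ edges trivially satisfy $c(\phi(e))\leq 2=c(e)$. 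Summing yields $c(\phi(E^-))-c(\phi(E^+))\geq c(E^-)-c(E^+)>0$, the desired contradiction.

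The main obstacle in carrying this plan out cleanly is the cost case-analysis for $E^+$, in particular verifying that no cost-$1$ added edge secretly jumps over segment $h$, together with the careful handling of the cyclic wrap-around of indices when $h$ happens to sit near the seam of the labeling. Once that bookkeeping is settled, the topological claim that contraction sends a $k$-move to a $k$-move and the cost-preservation argument combine to complete the induction.
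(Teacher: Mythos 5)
Your proposal is correct and follows essentially the same route as the paper: transfer an improving $k$-move on $T'_s$ to $T'_{s-1}$ by deleting a segment not involved in the move (using regularity conditions 2 and 3, plus the same implicit cost-consistency across the family that the paper also uses, to see that removed-edge costs are preserved and added-edge costs do not increase), and then descend to $s=2k$. Your formulation via induction and a single segment containing no endpoint of the move is a slightly cleaner packaging of the paper's iterative ``remove a segment from a gap with more than two complete segments'' step, but the underlying argument is the same.
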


\begin{proof}
Assume that there is an improving $k$-move for some $T'_s$ with $s\geq 2k$. We show that there is an improving $k$-move for $T'_{2k}$. Assume that the $k$-move removes the edges $e_1,\dots, e_k$ which lie on $T_s$ in this cyclic order. If there is an $i\in \{1,\dots, k\}$ such that between $e_i$ and $e_{i+1}$ lie more than two complete segments where $e_{k+1}:=e_1$, then we can map this $k$-move to $T'_{s-1}$ in $I'_{s-1}$ by removing one of these segments. More precisely, we map $e_1,\dots, e_k$ to $T'_{s-1}$ in $I'_{s-1}$ without changing their positions in the segments and the distances between the edges $e_j$ and $e_{j+1}$ for all $j\in\{1,\dots, k\}\backslash \{i\}$. The distance between $e_i$ and $e_{i+1}$ is by the length of a segment, i.e.\ by $l$, shorter as in $I'_s$. After removing $e_1, \dots, e_k$ the new $k$-move for $T'_{s-1}$ connects the same endpoints of $e_1, \dots, e_k$ as the original $k$-move for $T'_{s}$. By the regular property, the cost of the edges we add and remove by the two $k$-moves are the same. Repeat this procedure and stop if we have a $k$-move for $T'_{2k}$ in $I'_{2k}$. If no such modifications are possible, there is at most one complete segment between $e_j$ and $e_{j+1}$ for all $j\in\{1,\dots, k\}$, therefore the instance has at most $2k$ segments. Hence, in the end we get an improving $k$-move for $T'_{2k}$.
\end{proof}

\begin{lemma} \label{3 Opt T 3-optimal}
The constructed tour $T$ is 3-optimal for $I_s$ with $s$ even and $s\geq 12$.
\end{lemma}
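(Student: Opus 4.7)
The plan is to reduce the claim, via Lemma \ref{regular k optimality}, to checking $3$-optimality of $T$ in a single finite instance, and then to verify that base case by a computer-assisted enumeration of $3$-moves.

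First I would set up the family so that Lemma \ref{regular k optimality} applies. The cost function has period $8$, but with segment length $l = 8$ the edges $\{v_{8h+7}, v_{8(h+2)+1}\}$ would span two whole segments and violate the third regularity condition. I would therefore take $l = 16$, grouping two consecutive $8$-periods into one segment, and view the even-indexed subfamily $(I_{2s})_{s \in \N}$ as a regular family $(I'_s)_{s \in \N}$ by setting $I'_s := I_{2s}$. The labeling and periodicity conditions are immediate: $I_{2s}$ has $16s$ vertices, and the cost function is $8$-periodic hence $16$-periodic. For the third regularity condition, one checks directly from the explicit list of cost-$1$ edges that every such edge has index difference at most $11$ (attained e.g.\ by $\{v_{8h+2}, v_{8h+13}\}$ and $\{v_{8h-8}, v_{8h+3}\}$), so its endpoints lie either in the same $16$-segment or in two adjacent ones. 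The restriction to even $s$ is exactly what is needed to have an integer number of $16$-segments.

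Next, I would apply Lemma \ref{regular k optimality} with $k = 3$. It reduces $3$-optimality of $T$ for every $I'_s$ with $s \geq 2k = 6$ to $3$-optimality of $T$ for $I'_6 = I_{12}$. Translating back through $s' = 2s$, this yields $3$-optimality of $T$ for $I_{s'}$ for every even $s' \geq 12$, which is exactly what the lemma asserts.

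The remaining task is the base case: verifying that $T$ is $3$-optimal for the fixed instance $I_{12}$ on $96$ vertices. I would carry this out by computer. The enumeration runs over every unordered triple of tour edges (which breaks $T$ into three arcs) and, for each triple, over each of the at most seven ways of reconnecting the three arcs into a Hamiltonian cycle by adding three new edges; one evaluates the cost of the new edges from the explicit periodic definition and compares it to the three removed (cost-$1$) tour edges, confirming that no choice strictly decreases the total cost.

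The main obstacle lies in this base-case check: although the enumeration is modest in size, care is needed to cover all reconnection topologies of a $3$-move, to handle the degenerate subcases in which the removed edges share endpoints, and to implement the cost function faithfully so that in particular the long cost-$1$ edges (such as $\{v_{8h+7}, v_{8(h+2)+1}\}$) are correctly recognized.
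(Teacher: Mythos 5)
Your proposal matches the paper's proof: it passes to the subfamily $I'_s := I_{2s}$ with segment length $l=16$ (since $l=8$ violates the third regularity condition via the edges $\{v_{8h+7},v_{8(h+2)+1}\}$), applies Lemma~\ref{regular k optimality} with $k=3$ to reduce to the single instance $I'_6 = I_{12}$, and settles that base case by a computer enumeration of all 3-moves. The only difference is that you spell out the verification of the regularity conditions and the enumeration details slightly more explicitly than the paper does.
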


\begin{proof}
For $l=8$ the instances $I_s$ are not regular but satisfy the first two conditions of regularity by construction. The third condition is violated since there are for example edges $\{v_{8h+7},v_{8(h+2)+1}\}$ of cost 1 whose endpoints are up to 2 segments apart. We can construct the instance $I'_s:=I_{2s}$ and choose $l=16$ to get a regular family of instances since the third condition is also satisfied. Therefore, by Lemma \ref{regular k optimality} it is enough to check that the constructed tour $T$ in $I'_6=I_{12}$ is 3-optimal. We checked this using a self-written computer program that generated all possible 3-moves for the instance and observed that none of them is improving. 
\end{proof}

\begin{theorem}
The approximation ratio of the 3-Opt algorithm for \textsc{(1,2)-TSP} is at least $\frac{11}{8}$.
\end{theorem}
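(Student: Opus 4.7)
The plan is to combine the two ingredients already assembled: the explicit family of instances $I_s$ together with the tour $T$, which by Lemma~\ref{3 Opt T 3-optimal} is 3-optimal for every even $s\geq 12$, and the upper bound on the length of an optimal tour from Lemma~\ref{3 Opt length optimal tour}. Since 3-optimality means no improving 3-move exists, the 3-Opt algorithm may legitimately terminate on $T$ when run on $I_s$ (with $T$ as the starting tour), so the approximation ratio of the algorithm is at least $|T|/|T^*|$ on this instance.

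Concretely, I would first recall that the constructed tour $T$ uses exactly the edges $\{v_i,v_{i+1}\}$ for $i\in \Z/8s\Z$. By the definition of the edge costs, among the $8s$ tour edges, $3s$ have cost $2$ (those not of the form $\{v_{8h+j},v_{8h+j+1}\}$ with $j\in\{0,1,2,3,4\}$), so $|T|=5s\cdot 1 + 3s\cdot 2 = 11s$. Combining this with $|T^*|\leq 8s+4$ from Lemma~\ref{3 Opt length optimal tour} yields the ratio
\begin{equation*}
\frac{|T|}{|T^*|} \;\geq\; \frac{11s}{8s+4}.
\end{equation*}
Letting $s\to\infty$ along the even integers $s\geq 12$, for which Lemma~\ref{3 Opt T 3-optimal} guarantees 3-optimality of $T$, gives $\lim_{s\to\infty}\frac{11s}{8s+4}=\frac{11}{8}$, and hence the approximation ratio of the 3-Opt algorithm is at least $\tfrac{11}{8}$.

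There is no genuine obstacle at this stage: all of the difficulty has already been absorbed into Lemma~\ref{3 Opt length optimal tour} (which exhibits a short tour by cycle-cover surgery on the four disjoint 1-cost 2-factors) and Lemma~\ref{3 Opt T 3-optimal} (which reduces 3-optimality of the whole family to the finite instance $I_{12}$ via Lemma~\ref{regular k optimality} and a computer check). The only point worth stating explicitly is that the 3-Opt algorithm is understood in the usual sense where any 3-optimal tour is a valid output; this is implicit in the definition given in the introduction.
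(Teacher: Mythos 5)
Your proposal is correct and follows exactly the paper's own argument: cite Lemma~\ref{3 Opt T 3-optimal} for 3-optimality of $T$ on $I_s$ ($s$ even, $s\geq 12$), note $c(T)=11s$ (your count of $5s$ cost-1 and $3s$ cost-2 tour edges is right), use Lemma~\ref{3 Opt length optimal tour} for $c(T^*)\leq 8s+4$, and let $s\to\infty$ to get the ratio $\tfrac{11}{8}$. No differences worth noting.
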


\begin{proof}
By Lemma \ref{3 Opt T 3-optimal}, the constructed tour $T$ is 3-optimal for $I_s$ with $s$ even and $s\geq 12$. By construction, it has length $11s$ and by Lemma \ref{3 Opt length optimal tour} the length of the optimal tour is at most $8s+4$. Thus, the approximation ratio is at least $\frac{11s}{8s+4}\to \frac{11}{8}$ for $s\to \infty$.
\end{proof}

\subsection{Upper Bound on the Approximation Ratio of the 3-Opt Algorithm}
For the upper bound assume that an instance with a 3-optimal tour $T$ is given. Let $T^*$ be a fixed optimal tour of the instance. We want to bound the length of $T$ compared to the length of $T^*$ by bounding the number of edges of cost 2. Our general strategy is like \cite{DBLP:journals/eccc/ECCC-TR05-069} to distribute counters to the vertices such that on the one hand if there are many edges of cost 2 in $T$, many counters are distributed. On the other hand, for many counters we need many edges of length 1 to avoid creating an improving 3-move. This way, we get a lower bound on the fraction of the edges with length 1 in $T$ and this implies an upper bound on the approximation ratio. 

We start by describing a procedure to distribute the counters to the vertices of the tour $T$. Then, in the first half of this subsection we show properties of the counters that is ensured by the 3-optimality of the tour. Using these properties we show in the second half that we do not distribute too many counters. In order to show this, we build a linear program whose objective value is an upper bound on the number of counters distributed. We give a solution of the corresponding dual LP to obtain the upper bound. In the end we notice that we distribute many counters if the number of edges of cost 2 is large in $T$. As we have an upper bound on the number of counters we get an upper bound on the number of edges with cost 2 in $T$ and therefore on the approximation ratio.

Let the \emph{1-paths} be the connected components we obtain after deleting all edges with cost 2 in $T$. We call the vertices with degree 1 in a 1-path the \emph{endpoints} of the 1-path. Now, we want to distribute counters to the vertices of the tour $T$.

\begin{definition} \label{counter distribution}
We distribute counters as follows: For 1-paths of length 0 consisting of the vertex $v$ we distribute two counters to the vertex $w$ if $\{v,w\}\in T^*, c(v,w)=1$. We call these counters \emph{good}. For every 1-path of length greater than 0 we distribute a counter on $w$ if $v$ is an endpoint of the 1-path and $\{v,w\}\in T^*, c(v,w)=1$. These counters are called \emph{bad}.
\end{definition}

Next, we show some properties of the counters and $T$ we need for the analysis.

\begin{lemma} \label{3Opt 1-edge connecting endpoints}
Let $p,q$ be the endpoints of different 1-paths of a 3-optimal tour $T$, then $c(p,q)=2$.
\end{lemma}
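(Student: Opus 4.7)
The plan is to argue by contradiction: suppose $c(p,q)=1$ and produce an improving $k$-move with $k\le 3$, which would contradict the 3-optimality of $T$. Both $p$ and $q$ are endpoints of 1-paths, so each has at least one incident tour edge of cost 2; denote them $\{p,b\}$ and $\{q,d\}$. A singleton 1-path contributes two such cost-2 tour edges at its vertex, and I will exploit this freedom below.

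My first attempt will be the 2-opt that removes $\{p,b\}$ and $\{q,d\}$. Depending on how these two tour edges are oriented along $T$, the unique valid reconnection into a Hamiltonian cycle either produces the edges $\{p,q\}$ and $\{b,d\}$, for a cost change of at most $1+2-4=-1$ and therefore improving, or instead produces $\{p,d\}$ and $\{b,q\}$, which carries no guaranteed improvement. If at least one of $p,q$ is a singleton endpoint, I can choose among its two cost-2 incident edges to force the favorable orientation, so the 2-opt already yields an improvement in that subcase. Hence I may assume that both $p$ and $q$ sit at endpoints of 1-paths of length at least one and that, after fixing an orientation, the tour reads $\dots a\to p\to b\to\dots\to d\to q\to c\to\dots$ with $\{a,p\}$ and $\{q,c\}$ cost-1 tour edges. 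This is the hard case.

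In the hard case I will next rule out that $T$ contains only two 1-paths: if it did, the other endpoint of $q$'s 1-path would have to coincide with $b$ (otherwise $\{p,q\}$ would itself be a cost-2 tour edge, contradicting $c(p,q)=1$), and a direct check then shows that the cost-2 edges at $p$ and $q$ are co-oriented, returning us to the easy case. So $T$ has at least three 1-paths, and hence there exists a third cost-2 tour edge $\{X,Y\}$ lying in the long segment running from $q$ forward through $c$ back to $p$. I will then carry out the 3-opt that removes $\{p,b\}$, $\{d,q\}$, and $\{X,Y\}$ and reconnects by adding $\{p,q\}$, $\{b,X\}$, and $\{d,Y\}$. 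Tracing the three remaining paths in order confirms that this produces a single Hamiltonian cycle, and the cost change is at most $1+2+2-2-2-2=-1$, yielding the needed improvement.

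The main obstacle will be the orientation bookkeeping in the hard case and the verification that the proposed 3-opt reconnection yields a single cycle rather than several smaller ones. Ruling out the two-1-path subcase is the key structural observation: it is precisely what guarantees enough 1-paths for a usable third cost-2 edge, and hence for the 3-opt to be available. Once that is in place, the remainder reduces to straightforward edge-cost arithmetic and a direct traversal check.
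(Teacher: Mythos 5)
Your proposal is correct and takes essentially the same route as the paper: attempt the 2-move that inserts $\{p,q\}$ (improving whenever the reconnection is tour-preserving), and in the unfavorable orientation pick a third cost-2 tour edge $\{X,Y\}$ on the $q$-to-$p$ segment and perform the 3-move adding $\{p,q\},\{b,X\},\{d,Y\}$ --- exactly the paper's move, which the paper merely phrases as two successive 2-moves. One small remark: $\{X,Y\}$ lies on that segment not because there are at least three 1-paths but simply because $p$ and $q$ belong to different 1-paths (otherwise that segment would consist only of cost-1 edges and $p,q$ would share a 1-path), which also makes your two-1-path exclusion and the separate singleton case distinction unnecessary.
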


\begin{proof}
Assume there is an edge $\{p,q\}$ connecting two endpoints $p,q$ of different 1-paths with $c(p,q)=1$ we show that there is an improving 3-move. Let $\{p,q\}$ be incident to the two edges $\{p,u\}$ and $\{q,v\}$ with cost 2 in $T$. We perform first a 2-move replacing the edges $\{p,u\}$ and $\{q,v\}$ by $\{p,q\}$ and $\{u,v\}$. Then, the cost decreases since $c(p,u)+c(q,v)=4>1+2\geq c(p,q)+c(u,v)$. Hence, if afterward the tour stays connected we have found an improving 2-move. It remains the case that the tour splits into two connected components. Since $\{p,q\}$ does not connect two endpoints of a single 1-path, there has to be an edge $\{a,b\}$ of cost 2 in the connected component containing $p$ and $q$ (Figure \ref{3Opt Terminal Moves}). We perform a 2-move replacing $\{u,v\}$ and $\{a,b\}$ by $\{a,u\}$ and $\{b,v\}$ to get a connected tour again. Note that in total we performed a single 3-move since we added $\{u,v\}$ and removed it again. In the end the total cost decreased compared to the initial tour $T$ since 
\begin{align*}
c(p,u)+c(q,v)+c(a,b)&=2+2+2>1+2+2\geq c(p,q)+c(a,u)+c(b,v).
\end{align*}
This is a contradiction to the 3-optimality of $T$.
\end{proof}

\begin{figure}[!htb]
\centering
\begin{tikzpicture}[line cap=round,line join=round,>=triangle 45,x=1.3cm,y=1.3cm]
\draw [line width=1.5pt,color=ffqqqq] (-10,2)-- (-8,2);
\draw [line width=1.5pt,color=ffqqqq] (-10.879385241571816,6.987241532966372)-- (-11.879385241571816,5.255190725397496);
\draw [line width=1.5pt,color=black] (-7.120614758428183,6.9872415329663715)-- (-6.120614758428183,5.255190725397494);
\draw [shift={(-9,4.747477419454622)},line width=1.5pt]  plot[domain=0.8726646259971647:2.268928027592628,variable=\t]({1*2.923804400163087*cos(\t r)+0*2.923804400163087*sin(\t r)},{0*2.923804400163087*cos(\t r)+1*2.923804400163087*sin(\t r)});
\draw [shift={(-9,4.747477419454622)},line width=1.5pt]  plot[domain=2.96705972839036:4.363323129985823,variable=\t]({1*2.9238044001630867*cos(\t r)+0*2.9238044001630867*sin(\t r)},{0*2.9238044001630867*cos(\t r)+1*2.9238044001630867*sin(\t r)});
\draw [shift={(-9,4.747477419454622)},line width=1.5pt]  plot[domain=-1.221730476396031:0.17453292519943273,variable=\t]({1*2.9238044001630867*cos(\t r)+0*2.9238044001630867*sin(\t r)},{0*2.9238044001630867*cos(\t r)+1*2.9238044001630867*sin(\t r)});
\draw [line width=1.5pt,dotted] (-10.879385241571816,6.987241532966372)-- (-8,2);
\draw [line width=1.5pt,dotted,color=black] (-11.879385241571816,5.255190725397496)-- (-7.120614758428183,6.9872415329663715);
\draw [line width=1.5pt,dotted,color=black] (-10,2)-- (-6.120614758428183,5.255190725397494);
\begin{scriptsize}
\draw [fill=black] (-10,2) circle (2.5pt);
\draw[color=black] (-9.990377602610709,1.7786739925234968) node {$v$};
\draw [fill=black] (-8,2) circle (2.5pt);
\draw[color=black] (-7.948550281308474,1.7786739925234968) node {$q$};
\draw [fill=black] (-6.120614758428183,5.255190725397494) circle (2.5pt);
\draw[color=black] (-6.037480374670915,5.4694737644688045) node {$b$};
\draw [fill=black] (-7.120614758428183,6.9872415329663715) circle (2.5pt);
\draw[color=black] (-7.043306641322262,7.199494943109121) node {$a$};
\draw [fill=black] (-10.879385241571816,6.987241532966372) circle (2.5pt);
\draw[color=black] (-11.026378657261596,7.129494943109121) node {$p$};
\draw [fill=black] (-11.879385241571816,5.255190725397496) circle (2.5pt);
\draw[color=black] (-11.99197187324689,5.409532027135319) node {$u$};
\end{scriptsize}
\end{tikzpicture}
  \caption{Sketch for Lemma \ref{3Opt 1-edge connecting endpoints} and Lemma \ref{3Opt forbidden constellation}. The tour $T$ consists of the solid edges and the cost of each red edge is 2. In Lemma \ref{3Opt 1-edge connecting endpoints} we have in addition $c(a,b)=2$ and $c(p,q)=1$, while in Lemma \ref{3Opt forbidden constellation} we have instead $c(a,u)=c(b,v)=1$. In both cases we replace the edges $\{p,u\}$, $\{q,v\}$ and $\{a,b\}$ by $\{p,q\}$, $\{a,u\}$ and $\{b,v\}$.}
  \label{3Opt Terminal Moves}
\end{figure}
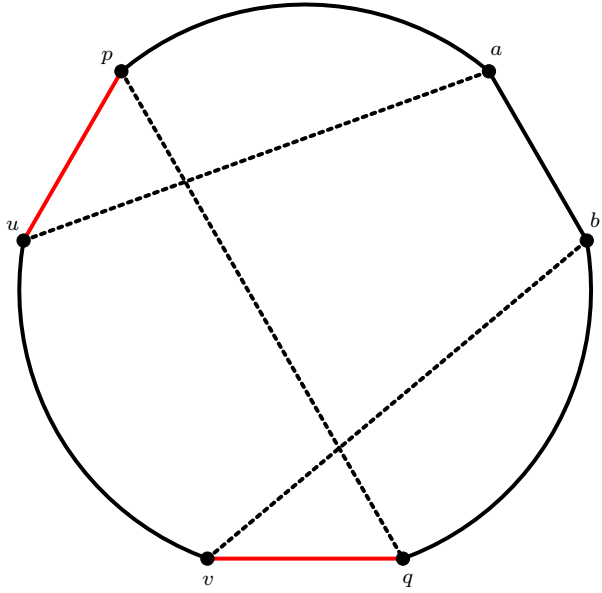

\begin{corollary} \label{3 Opt at most one counter endpoint}
The endpoints $p$ and $q$ of a 1-path of $T$ can only have counters distributed by the edge $\{p,q\}$. In particular, each of them can have at most one bad counter.
\end{corollary}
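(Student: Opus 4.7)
The plan is to trace through Definition~\ref{counter distribution} and then appeal to Lemma~\ref{3Opt 1-edge connecting endpoints} to eliminate any counter originating outside the 1-path of $p$ and $q$. Suppose $p$ receives a counter from a vertex $v$. By Definition~\ref{counter distribution}, $v$ is either the unique vertex of a length-$0$ 1-path (contributing a good counter) or an endpoint of a 1-path of positive length (contributing a bad counter), and in either case $\{v,p\}\in T^*$ with $c(v,p)=1$.

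Next I would split on whether $v$ lies in the same 1-path as $p$. If not, then $v$ and $p$ are endpoints of distinct 1-paths, and Lemma~\ref{3Opt 1-edge connecting endpoints} forces $c(v,p)=2$, contradicting the cost-$1$ condition demanded by the distribution rule. Hence $v$ must share a 1-path with $p$. Since $p$ and $q$ are the only endpoints of this 1-path and $v\neq p$ (edges of $T^*$ are not loops), it follows that $v=q$. Thus every counter that reaches $p$ is distributed along the single edge $\{p,q\}\in T^*$, and the symmetric argument yields the same statement for $q$.

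For the "at most one bad counter" clause, I would note that $q$ lies on a 1-path of positive length, so by Definition~\ref{counter distribution} it distributes only bad counters, and it distributes at most one per $T^*$-neighbor. Since $\{p,q\}$ is a single edge of $T^*$, $p$ receives at most one bad counter from $q$, and the identical argument applies to $q$.

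The main subtlety is ensuring that Lemma~\ref{3Opt 1-edge connecting endpoints} is applicable when the source vertex $v$ is an isolated vertex forming a length-$0$ 1-path, i.e., treating such a vertex as an endpoint of its own trivial 1-path. This is consistent with the proof of the lemma, which only needs that $v$ is incident in $T$ to two edges of cost $2$, a property that holds for every isolated 1-path vertex. With this reading the corollary is a straightforward unwinding of the definitions.
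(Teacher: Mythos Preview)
Your argument is correct and follows the same route as the paper: any counter on $p$ must come from an endpoint $v$ of some 1-path with $c(v,p)=1$, and Lemma~\ref{3Opt 1-edge connecting endpoints} rules out $v$ lying on a different 1-path, forcing $v=q$. Your write-up is simply more explicit than the paper's two-line proof, in particular in spelling out that a length-$0$ 1-path counts as having its single vertex as an endpoint so that the lemma applies, and in deriving the ``at most one bad counter'' clause from the fact that $q$ sits on a 1-path of positive length and hence distributes only bad counters.
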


\begin{proof}
For any other endpoint $r$ by Lemma \ref{3Opt 1-edge connecting endpoints} we have $c(p,r)=c(q,r)=2$. Therefore, these edges cannot assign a counter to $p$ or $q$.
\end{proof}

\begin{lemma} \label{3Opt forbidden constellation}
There are no vertices $p,q, u,v, a,b$ such that $\{p,u\},\{q,v\}, \{a,b\}\in T$, $c(p,u)=c(q,v)=2$, $c(a,u)=c(b,v)=1$ and $p,q,a,b$ lie on the same side of $\{u,v\}$ (Figure \ref{3Opt Terminal Moves}).
\end{lemma}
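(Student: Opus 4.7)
The plan is to derive a contradiction from 3-optimality of $T$ by performing exactly the same 3-move as in Lemma \ref{3Opt 1-edge connecting endpoints}: delete the three tour edges $\{p,u\}$, $\{q,v\}$, $\{a,b\}$ and insert the three edges $\{p,q\}$, $\{a,u\}$, $\{b,v\}$.

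First I would carry out the cost accounting. Since $c(p,u)=c(q,v)=2$ and $c(a,u)=c(b,v)=1$, while $c(p,q),c(a,b)\in\{1,2\}$, the net saving is
\begin{align*}
\bigl[c(p,u)+c(q,v)+c(a,b)\bigr]-\bigl[c(p,q)+c(a,u)+c(b,v)\bigr] \geq (2+2+1)-(2+1+1) = 1,
\end{align*}
so the move strictly decreases the total length regardless of the precise values of $c(p,q)$ and $c(a,b)$.

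The remaining step, and the one in which the side hypothesis is actually used, is to check that the new edge set forms a Hamiltonian cycle. Removing $\{p,u\}$ and $\{q,v\}$ from $T$ splits the cycle into two arcs, one from $u$ to $v$ and one from $p$ to $q$. Because $a,b,p,q$ all lie on the same side of $\{u,v\}$, the edge $\{a,b\}$ sits inside the $p$-$q$ arc, and removing it further splits that arc into two sub-arcs. Fixing an orientation so that the arc reads $p,\dots,a,b,\dots,q$, one can trace a single Hamiltonian cycle as $p\to\cdots\to a \to u \to\cdots\to v \to b\to\cdots\to q \to p$ using the three new edges; the opposite orientation $p,\dots,b,a,\dots,q$ yields the symmetric traversal with the same edge set.

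I expect the only technical obstacle to be accounting for possible coincidences among the six vertices, for example $b=p$ or $a=q$ (note that $a=p$ and $b=q$ are already ruled out by the cost hypotheses $c(p,u)=2\neq 1=c(a,u)$ and $c(q,v)=2\neq 1=c(b,v)$). In such degenerate cases one of the sub-arcs collapses to a single vertex, but the reconnection still yields a Hamiltonian cycle, and the cost inequality above is unchanged. Hence we obtain an improving 3-move, contradicting the 3-optimality of $T$ and proving the lemma.
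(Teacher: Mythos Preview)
Your proposal is correct and follows exactly the same approach as the paper: perform the 3-move that deletes $\{p,u\},\{q,v\},\{a,b\}$ and inserts $\{p,q\},\{a,u\},\{b,v\}$, then verify the cost drops via $2+2+1>2+1+1$. The paper's own proof is a two-line reference back to Lemma~\ref{3Opt 1-edge connecting endpoints} plus the displayed inequality; your write-up simply spells out the Hamiltonicity check and the degenerate vertex-coincidence cases that the paper leaves implicit.
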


\begin{proof}
Like in the proof of Lemma \ref{3Opt 1-edge connecting endpoints} we can replace the edges $\{p,u\}$, $\{q,v\}$ and $\{a,b\}$ by $\{p,q\}$, $\{a,u\}$ and $\{b,v\}$. The cost of the tour decreases since in this case we have
\begin{align*}
c(p,u)+c(q,v)+c(a,b)\geq 2+2+1>2+1+1\geq c(p,q)+c(a,u)+c(b,v).
\end{align*}
\end{proof}

\begin{lemma} \label{3 Opt neighbors}
If the vertices $r$ and $t$ have good counters, then $\{r,t\} \not\in T$. Moreover, if there is a vertex $s$ with $\{r,s\},\{s,t\}\in T$, then $s$ does not have a counter.
\end{lemma}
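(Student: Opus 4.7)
Our plan is to contradict the $3$-optimality of $T$ by exhibiting an improving $3$-move in each case.

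We begin with structural preliminaries. A good counter on $r$ is delivered via an edge $\{v_r,r\}\in T^*$ of cost $1$ where $v_r$ is a 1-path of length $0$; in particular both tour-neighbors $u_1,u_2$ of $v_r$ are joined to it by 2-edges. By Corollary~\ref{3 Opt at most one counter endpoint} together with Lemma~\ref{3Opt 1-edge connecting endpoints}, which forbids a cost-$1$ edge between endpoints of distinct 1-paths, endpoints of 1-paths carry only bad counters. Hence $r$ is internal to a 1-path and both of its tour-edges have cost $1$; the symmetric statements hold for $t$ with corresponding $v_t,u_3,u_4$.

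For the first assertion, suppose for contradiction that $\{r,t\}\in T$. Then $\{r,t\}$ is itself a 1-edge and $r,t$ share a common 1-path. We distinguish two sub-cases. If $v_r=v_t=:v$, the two $T^*$-neighbors of $v$ are exactly $r$ and $t$; we consider the 3-move that removes $\{v,u_1\},\{v,u_2\},\{r,t\}$ and inserts $\{v,r\},\{v,t\},\{u_1,u_2\}$. A trace of the cyclic order of $T$ shows this is a valid Hamiltonian reconnection, with cost change at most $(1+1+2)-(2+2+1)=-1$, contradicting 3-optimality. If instead $v_r\neq v_t$, we invoke Lemma~\ref{3Opt forbidden constellation} with $u:=v_r$, $v:=v_t$, $a:=r$, $b:=t$, and $p,q$ chosen respectively among $\{u_1,u_2\}$ and $\{u_3,u_4\}$. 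The cost hypotheses $c(p,u)=c(q,v)=2$ and $c(a,u)=c(b,v)=1$ follow from the construction, and the ``same side'' condition is achieved by picking the appropriate neighbor of each of $v_r$ and $v_t$ (with a little care in the subcase where $v_r$ and $v_t$ are tour-adjacent, so that the three removed edges remain distinct).

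For the second assertion, assume $\{r,s\},\{s,t\}\in T$. Since $r$ and $t$ are internal, both tour-edges at $s$ have cost $1$, making $s$ internal to the same 1-path. If $s$ had a good counter, the first assertion applied to the pair $(r,s)$ would give $\{r,s\}\notin T$, a contradiction; hence any counter on $s$ must be bad. Such a counter comes from an edge $\{v_s,s\}\in T^*$ of cost $1$ in which $v_s$ is an endpoint of a 1-path of length at least $1$, so $v_s$ has exactly one tour-edge of cost $2$, say $\{v_s,u_s\}$. Since $v_r$ is isolated while $v_s$ lies in a longer 1-path, $v_r\neq v_s$, and we apply Lemma~\ref{3Opt forbidden constellation} with $u:=v_r$, $v:=v_s$, $a:=r$, $b:=s$, $p\in\{u_1,u_2\}$ and $q:=u_s$, choosing $p$ to realize the ``same side'' condition.

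The principal obstacle we anticipate is verifying, in each invocation of Lemma~\ref{3Opt forbidden constellation}, that the ``same side'' hypothesis can indeed be arranged. This reduces to a short case analysis on the cyclic positions of $v_r,v_t,r,t$ (respectively $v_r,v_s,r,s$) along $T$, which the flexibility to choose between the two tour-neighbors of $v_r$ (and $v_t$) makes straightforward.
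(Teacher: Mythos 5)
Your treatment of the first statement is sound and essentially the paper's route: reduce to Lemma~\ref{3Opt forbidden constellation} with $a=r$, $b=t$ and suitably chosen tour-neighbors $p,q$ of the two length-0 1-paths; since both $p$ and $q$ are free there, the same-side condition can indeed be arranged. Your extra subcase $v_r=v_t$, settled by an explicit 3-move, is correct and is a point the paper glosses over.

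The second statement, however, has a genuine gap. You commit to pairing the counter on $s$ with $r$'s good counter only, invoking Lemma~\ref{3Opt forbidden constellation} with $u=v_r$, $v=v_s$, $a=r$, $b=s$, $q=u_s$, and you claim the same-side hypothesis can be arranged by choosing $p$. It cannot: $r$ and $s$ automatically lie on one arc of the chord $\{v_r,v_s\}$, but $q=u_s$ is forced (it is the unique cost-2 tour-neighbor of $v_s$ in the bad-counter case), and nothing prevents it from lying on the other arc. Concretely, if the cyclic order of $T$ is $v_r,\dots,r,s,t,\dots,v_s,u_s,\dots,v_r$, so that $u_s$ sits on the arc between $v_s$ and $v_r$ avoiding $r,s,t$, then $u_s,r,s$ never lie on one side of $\{v_r,v_s\}$, no choice of $p$ helps, and the move the lemma would produce (remove $\{p,v_r\},\{u_s,v_s\},\{r,s\}$, add $\{p,u_s\},\{r,v_r\},\{s,v_s\}$) is not even a tour for either choice of $p$, because $\{s,v_s\}$ closes the subtour $s,t,\dots,v_s$. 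Since within a 3-move the edge removed at $r$ and at $s$ must be $\{r,s\}$ itself, this cannot be repaired while insisting on adding both $\{v_r,r\}$ and $\{v_s,s\}$. The missing idea, and exactly what the paper does, is to keep both good counters in play: depending on which side of $\{v_s,s\}$ the forced vertex $u_s$ lies (the side of $r$ or of $t$), pair $s$'s counter with $t$'s good counter (take $a=s$, $b=t$, removing $\{s,t\}$) or with $r$'s (take $a=r$, $b=s$, removing $\{r,s\}$); with that choice $\{v_s,s\}$ no longer closes a subtour and an improving 3-move results (one should still verify the reconnection directly, choosing which tour edge at the length-0 vertex to delete, since the literal same-side hypothesis can fail even when the improving move exists). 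Your reduction of the case of a good counter on $s$ to the first statement is fine; the gap concerns the bad-counter case, which is the main one.
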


\begin{proof}
For the first statement assume the contrary, then there are two 1-paths of length 0 consisting of the vertices $u$ and $v$ such that $c(r,u)=c(t,v)=1$, respectively. Since $u$ and $v$ are 1-paths of length 0 we can choose $a=r$, $b=t$ and appropriate neighbors $p$ and $q$ to contradict Lemma \ref{3Opt forbidden constellation}.

Similarly, for the second statement assume there are such vertices $r,s,t$. Then, there is an endpoint $w$ of a 1-path with $c(w,s)=1$ and a vertex $z$ with $\{z,w\}\in T$ and $c(z,w)=2$. Now, $z$ lies either on the same side of $\{w,s\}$ as $r$ or $t$. Depending on this we get a contradiction to Lemma \ref{3Opt forbidden constellation} for $a=s,b=t$ or $a=r, b=s$. 
\end{proof}

\begin{lemma} \label{3 Opt endpoints no counter}
Let $p,q$ be the endpoints of a 1-path containing $w$ with $c(w,q)=1$. If $w$ has a good counter, then $p,q$ do not have counters.
\end{lemma}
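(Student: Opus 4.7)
Suppose, for contradiction, that at least one of $p,q$ carries a counter. Since $p$ and $q$ are endpoints of a $1$-path of positive length, Corollary~\ref{3 Opt at most one counter endpoint} forces this counter to be bad and to be distributed along the $T^*$-edge $\{p,q\}$; in particular $\{p,q\}\in T^*$ and $c(p,q)=1$. Let $p^T$ and $q^T$ denote the $T$-neighbors of $p$ and $q$ lying outside the $1$-path, so $c(p,p^T)=c(q,q^T)=2$. Because $w$ has a good counter, there exists a $1$-path of length $0$ consisting of a single vertex $v$ with $\{v,w\}\in T^*$ and $c(v,w)=1$; let $u_1,u_2$ be the two $T$-neighbors of $v$, both joined to $v$ by cost-$2$ tour edges. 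Applying Lemma~\ref{3Opt 1-edge connecting endpoints} to $\{v\}$ and the $1$-path containing $w$, the cost-$1$ edge $\{v,w\}$ forces $w$ to be a strict interior vertex of its $1$-path.

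The strategy is now to exhibit an improving $3$-move, contradicting the $3$-optimality of $T$. Pick $i\in\{1,2\}$ according to the cyclic order of $T$; remove the three tour edges $\{v,u_i\}$, $\{p,p^T\}$ and $\{w,q\}$ (the last being a $T$-edge, which is the main case), and add the three edges $\{v,w\}$, $\{p,q\}$ and $\{u_i,p^T\}$. The removed edges cost $2+2+1=5$ and the added edges cost at most $1+1+2=4$, so the new tour is at least $1$ cheaper. Exactly one of the two choices of $i$ ensures that the new edges reconnect the three arcs of $T\setminus\{\{v,u_i\},\{p,p^T\},\{w,q\}\}$ into a single Hamiltonian cycle; this is verified by inspecting the cyclic order
\[
\ldots-p^T-p-\ldots-w-q-q^T-\ldots-u_i-v-u_{3-i}-\ldots-p^T
\]
and checking that the six endpoints $\{v,u_i,p,p^T,w,q\}$ are correctly paired by $\{v,w\}\cup\{p,q\}\cup\{u_i,p^T\}$.

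The main obstacle is this connectivity verification, complemented by an analogous argument handling the subcase $\{w,q\}\notin T$, in which one adds the cost-$1$ instance edge $\{w,q\}$ as an added edge and removes a suitable alternative tour edge (for instance, the $T$-edge between $q$ and its neighbor inside the $1$-path) in place of $\{w,q\}$. In every case we obtain a contradiction to the $3$-optimality of $T$, hence neither $p$ nor $q$ can have a counter.
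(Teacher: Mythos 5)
Your ``main case'' is, up to relabeling, exactly the paper's proof. The 3-move you describe -- remove $\{v,u_i\}$, $\{p,p^T\}$, $\{w,q\}$ and add $\{v,w\}$, $\{p,q\}$, $\{u_i,p^T\}$ -- is precisely the move in the paper, where $u$, $v_1$ and $r$ play the roles of your $v$, $u_i$ and $p^T$ (with the labels $p$ and $q$ interchanged, since the paper's proof removes $\{p,w\}$ while its statement names the $w$-side endpoint $q$). Your side condition ``exactly one choice of $i$ reconnects the three arcs'' corresponds to the paper's choice of $v_1$ as the neighbor of $u$ on the appropriate side of the chord $\{u,w\}$, and the cost accounting $2+2+1>1+1+2$ is identical. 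The preliminary reductions (via Corollary~\ref{3 Opt at most one counter endpoint} to $\{p,q\}\in T^*$ with $c(p,q)=1$, and via Lemma~\ref{3Opt 1-edge connecting endpoints} to $w$ being interior) also match.

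The only divergence is your ``subcase $\{w,q\}\notin T$'', and there your sketch does not work: after removing $\{v,u_i\}$, $\{p,p^T\}$ and the path edge at $q$, the vertex $w$ is still an interior vertex of one of the three arcs and has tour-degree $2$, so you may not add $\{w,q\}$ (nor $\{v,w\}$) incident to it. More generally, no $3$-move can add all of $\{v,w\}$, $\{p,q\}$, $\{w,q\}$, since that would require removing two tour edges at $w$ and two at $q$ plus one each at $v$ and $p$, i.e.\ at least four removed edges. Fortunately this subcase is not what the lemma is about: in the paper $c(w,q)=1$ refers to the tour edge of the 1-path joining $w$ to the endpoint $q$ -- see property~\ref{prop bound good counters} of Corollary~\ref{counters conditions}, which is stated with $\{w,p\}\in T$, and the application in Lemma~\ref{3Opt 1-paths points}, where $w$ is the middle vertex of a length-2 path -- and the paper's own proof likewise removes the tour edge between $w$ and an endpoint. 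So under the intended reading your main case is the entire proof and is correct; if you insist on the literal reading (an arbitrary cost-1 chord $\{w,q\}$), the ``analogous argument'' you allude to does not exist in the form you describe and would have to be replaced by something genuinely different.
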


\begin{proof}
Assume the contrary, by Corollary \ref{3 Opt at most one counter endpoint} the counters of $p$ and $q$ are distributed by the edge $\{p,q\}$ and we must have $c(p,q)=1$. Let the good counter of $w$ be originated from the 1-path of length 0 consisting of the vertex $u$. Moreover, let w.l.o.g.\ $v_1$ be the vertex adjacent to $u$ in $T$ with $c(v_1,u)=2$ and lying on the same side of $\{u,w\}$ as $p$ and let $r$ be the vertex with $\{q,r\}\in T$ and $c(q,r)=2$ (Figure \ref{3-Opt++ Moves}). Then, we can replace the edges $\{v_1,u\}$, $\{p,w\}$ and $\{q,r\}$ by $\{v_1,r\}$, $\{u,w\}$ and $\{p,q\}$. The cost of $T$ decreases since 
\begin{align*}
c(v_1,u)+c(p,w)+c(q,r)&=2+1+2>2+1+1\geq c(v_1,r)+ c(u,w)+c(p,q).
\end{align*}
This is a contradiction to the 3-optimality.
\end{proof}

\begin{figure}[!htb]
\centering
 \begin{tikzpicture}[line cap=round,line join=round,>=triangle 45,x=2cm,y=2cm]
\draw [line width=1.5pt,dotted] (-8.84125353283118,3.5406408174555972)-- (-7,3);
\draw [line width=1.5pt,dotted] (-8,3)-- (-7.5,6.477576385886735);
\draw [line width=1.5pt,dotted] (-6.540507026385503,6.195843829045305)-- (-8.84125353283118,3.5406408174555972);
\draw [line width=1.5pt,dotted] (-8.459492973614497,6.195843829045305)-- (-6.15874646716882,3.5406408174555972);
\draw [line width=1.5pt, color=red] (-8.459492973614497,6.195843829045305)-- (-7.5,6.477576385886735);
\draw [line width=1.5pt, color=red] (-7.5,6.477576385886735)-- (-6.540507026385503,6.195843829045305);
\draw [line width=1.5pt] (-8.84125353283118,3.5406408174555972)-- (-8,3);
\draw [line width=1.5pt] (-8,3)-- (-7,3);
\draw [line width=1.5pt,color=red] (-7,3)-- (-6.15874646716882,3.5406408174555972);
\draw [shift={(-7.5,4.702843619444623)},line width=1.5pt]  plot[domain=2.141994991083949:3.8555909839511098,variable=\t]({1*1.7747327664421109*cos(\t r)+0*1.7747327664421109*sin(\t r)},{0*1.7747327664421109*cos(\t r)+1*1.7747327664421109*sin(\t r)});
\draw [shift={(-7.5,4.702843619444623)},line width=1.5pt]  plot[domain=-0.7139983303613162:0.9995976625058433,variable=\t]({1*1.7747327664421106*cos(\t r)+0*1.7747327664421106*sin(\t r)},{0*1.7747327664421106*cos(\t r)+1*1.7747327664421106*sin(\t r)});
\begin{scriptsize}
\draw [fill=black] (-8,3) circle (2.5pt);
\draw[color=black] (-7.969063201019697,2.884441718890698) node {$w$};
\draw [fill=black] (-7,3) circle (2.5pt);
\draw[color=black] (-6.9406239598445305,2.8916842487581287) node {$q$};
\draw [fill=black] (-6.15874646716882,3.5406408174555972) circle (2.5pt);
\draw[color=black] (-6.078762905620272,3.456601578417726) node {$r$};
\draw [fill=black] (-6.540507026385503,6.195843829045305) circle (2.5pt);
\draw[color=black] (-6.462616988594102,6.324643405920295) node {$v_2$};
\draw [fill=black] (-7.5,6.477576385886735) circle (2.5pt);
\draw[color=black] (-7.440358520697252,6.585374481147802) node {$u$};
\draw [fill=black] (-8.459492973614497,6.195843829045305) circle (2.5pt);
\draw[color=black] (-8.454312702137557,6.346613525390018) node {$v_1$};
\draw [fill=black] (-8.84125353283118,3.5406408174555972) circle (2.5pt);
\draw[color=black] (-8.932319673387987,3.434873988815434) node {$p$};
\end{scriptsize}
\end{tikzpicture}
  \caption{Sketch for Lemma \ref{3 Opt endpoints no counter} and Lemma \ref{3Opt Single}. The solid edges are the edges of the tour $T$ and the red edges have cost 2. In Lemma \ref{3 Opt endpoints no counter} we replace the edges $\{v_1,u\}$, $\{p,w\}$ and $\{q,r\}$ by $\{v_1,r\}$, $\{u,w\}$ and $\{p,q\}$. In Lemma~\ref{3Opt Single} the edges $\{u,v_2\}$ and $\{p,w\}$ are replaced by $\{v_2,p\}$ and $\{u,w\}$. }
  \label{3-Opt++ Moves}
\end{figure}
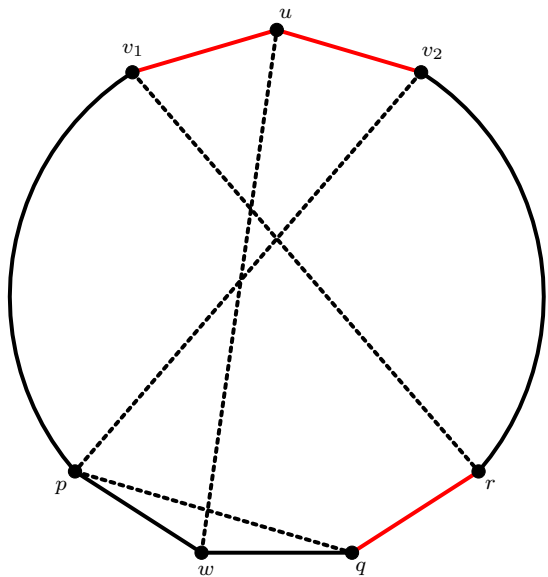

Before we bound the number of counters, we summarize the properties we showed about counters.

\begin{corollary} \label{counters conditions}
The following properties hold:
\begin{enumerate}
\item Every vertex has two \emph{slots} where a slot can be empty or contain two good counters or one bad counter.
\item If the vertices $a$ and $c$ both have good counters, then $\{a,c\}\not\in T$. Moreover, if there is a vertex $b$ such that $\{a,b\},\{b,c\}\in T$, then $b$ does not have counters. \label{prop distance good counters}
\item The endpoint of a 1-path of length 0 does not have counters. Each endpoint of all other 1-paths can only have at most one bad counter and no good counters. \label{prop endpoint bad counters} 
\item If the endpoint $p$ of a 1-path has a counter, then $w$ does not have a good counter if $\{w,p\}\in T$ and $c(w,p)=1$. \label{prop bound good counters}
\item The total number of bad counters is less or equal to four times the number of 1-paths of length greater than 0. \label{prop bad counters}
\end{enumerate}
\end{corollary}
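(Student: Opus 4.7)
The plan is to derive each of the five items as a short consequence of the lemmas already proved in this subsection, combined with a bit of bookkeeping against Definition~\ref{counter distribution}. Items~(2) and~(4) need essentially no further work: (2) is a restatement of Lemma~\ref{3 Opt neighbors} and (4) is Lemma~\ref{3 Opt endpoints no counter}.

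For item~(1) I would observe that a counter reaches a vertex $w$ only through an edge $\{v,w\}\in T^*$ with $c(v,w)=1$ where $v$ is an endpoint of some 1-path. Since $T^*$ is a Hamiltonian cycle, $w$ has exactly two $T^*$-neighbors, which yield its two slots. For a given slot, the source $v$ contributes either two good counters (if its 1-path has length $0$), one bad counter (if its 1-path has length greater than $0$ and $v$ is an endpoint of that 1-path), or nothing at all; this matches the three possibilities allowed for a slot.

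For item~(3) I would split by the length of the 1-path containing the endpoint in question. If $v$ is the sole vertex of a length-$0$ path and were to receive a counter along some $\{u,v\}\in T^*$ with $c(u,v)=1$, then $u$ would have to be an endpoint of a different 1-path, and Lemma~\ref{3Opt 1-edge connecting endpoints} forces $c(u,v)=2$, a contradiction; hence $v$ has no counters. If instead $p$ is an endpoint of a 1-path of length greater than $0$, Corollary~\ref{3 Opt at most one counter endpoint} already restricts $p$ to at most one counter, and that counter is necessarily bad, since its source is the other endpoint $q$ of $p$'s own 1-path, which lies in a 1-path of length greater than $0$ and therefore can only distribute bad counters.

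Item~(5) is then a pure counting step: each 1-path of length greater than $0$ has exactly two endpoints, each endpoint has exactly two $T^*$-neighbors, and each qualifying (endpoint, $T^*$-edge) pair contributes at most one bad counter, giving the claimed factor of four. I do not expect any real obstacle in this proof; the one spot that deserves care is verifying in~(3) that the endpoint case actually rules out good counters rather than merely bounding bad ones, and the observation that $q$ must itself lie in a non-trivial 1-path closes that gap immediately.
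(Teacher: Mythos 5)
Your proof is correct and follows essentially the same route as the paper, which derives properties (2), (3) and (4) from Lemma~\ref{3 Opt neighbors}, Corollary~\ref{3 Opt at most one counter endpoint} and Lemma~\ref{3 Opt endpoints no counter} and handles (1) and (5) by the same counting of $T^*$-edges of cost 1 at each vertex and endpoint. Your extra care in item (3) -- unfolding the length-0 case via Lemma~\ref{3Opt 1-edge connecting endpoints} and noting that a counter on an endpoint of a longer 1-path comes from the other endpoint of that same non-trivial 1-path and is therefore bad -- just makes explicit what the paper's citation of Corollary~\ref{3 Opt at most one counter endpoint} already contains.
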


\begin{proof}
The first property is due to the fact that every vertex is incident to at most two edges of cost 1 in the optimal tour and every such edge can distribute two good counters or one bad counter. The second, third and fourth property follow from Lemma \ref{3 Opt neighbors}, Corollary \ref{3 Opt at most one counter endpoint} and Lemma \ref{3 Opt endpoints no counter}, respectively. The fifth property follows from the fact that the bad counters are distributed by the endpoints of 1-paths with length greater than 0 and each such endpoint distributes at most two bad counters. 
\end{proof}

Next, we do not restrict to the distribution of counters according to Definition \ref{counter distribution}. Instead, we investigate arbitrary distributions of good and bad counters to the vertices of $T$. We call an arbitrary distribution of counters \emph{nice} if it satisfies the properties of Corollary \ref{counters conditions}. Moreover, we call an arbitrary distribution of counters \emph{crowded} if there are more than $\frac{12}{5}h$ counters, where $h$ is the number of edges with cost 1 in $T$. By Corollary \ref{counters conditions} we already know that the distribution of counters according to Definition \ref{counter distribution} is nice. We want to show that it is not crowded. To achieve this we show that every nice distribution of counters is not crowded. 

\begin{lemma} \label{distance of good counters}
If there is a nice and crowded distribution of counters, then there is a nice and crowded distribution of counters for possibly a different tour such that: Every vertex with good counters has four good counters and the distance of any two vertices on the same 1-path each having good counters is at least 3.
\end{lemma}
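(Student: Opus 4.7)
The plan is to proceed in two stages: first ensure every good-counter vertex carries four good counters; then enforce the distance condition, allowing a tour modification if necessary.

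\textbf{Stage 1.} Take any vertex $v$ with exactly two good counters; one of its two slots must be empty or hold a single bad counter. Discard the contents of that slot and place two additional good counters there. Items (2), (3) and (4) of Corollary \ref{counters conditions} restrict only \emph{whether} a vertex carries good counters, not how many, so none of these items is disturbed; item (1) holds by construction and item (5) can only improve, since we may have destroyed a bad counter. The total counter count strictly increases (by at least two per upgrade), so crowdedness is preserved. Iterating exhausts all such $v$ and yields a nice crowded distribution, for the same tour $T$, whose good-counter vertices each carry exactly four good counters.

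\textbf{Stage 2.} By item (2) of Corollary \ref{counters conditions}, two good-counter vertices are never adjacent, so the only way the distance condition can fail is that two good-counter vertices $u$ and $w$ lie at distance exactly two on a common 1-path, with an intermediate vertex $v$ that carries no counters (again by item (2)). The containing 1-path must have length at least four, since by item (3) no endpoint carries good counters, ruling out the shorter configurations. I would iteratively repair each such violation by one of two operations: (a) delete the four good counters on one of $u$ or $w$, provided the resulting distribution remains crowded; or (b) perform a cost-preserving tour modification, for instance a 2-opt exchanging the cost-1 edge $\{v,w\}$ with a suitable cost-2 edge elsewhere, cutting the 1-path so that $u$ and $w$ end up on separate 1-paths. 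In case (b) one must re-verify niceness for the new configuration, particularly items (3) and (4), which could fail if $w$ becomes an endpoint while still carrying good counters; Stage 1 can be re-invoked after such a modification.

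The main obstacle is Stage 2. Operation (a) is simple but can drop the counter count close to the threshold $\frac{12}{5}h$, while operation (b) preserves the total count but may spawn new violations elsewhere and also alter $h$ itself. I would control the process with a potential function ordering distributions first by the number of distance-two violations and then by the counter surplus above $\frac{12}{5}h$, and argue that one of the two operations is always admissible and strictly decreases this potential, thereby guaranteeing termination at a nice crowded distribution satisfying both required properties.
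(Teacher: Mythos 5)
Your Stage 1 coincides with the paper's first step and is fine. The problem is Stage 2, which is where the actual content of the lemma lies, and there you only sketch a repair scheme whose key claim is left unproven. Operation (a) (deleting the four good counters on one of the violating vertices) can destroy crowdedness, as you yourself note, and you give no criterion for when it is admissible; operation (b) (a 2-opt cutting the 1-path between $v$ and $w$) is not shown to be cost-preserving, changes $h$, can create new distance-2 violations and can turn a good-counter vertex into an endpoint of a 1-path, violating properties \ref{prop endpoint bad counters} and \ref{prop bound good counters} of Corollary \ref{counters conditions} --- and re-invoking Stage 1 does not repair that, since Stage 1 only adjusts how many counters a vertex carries, not whether it may carry them. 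The assertion that ``one of the two operations is always admissible and strictly decreases this potential'' is exactly what would need a proof, and it is not obvious; as it stands the argument does not go through.

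The paper avoids this termination problem entirely with an extremal argument: among all nice and crowded distributions (over all tours) choose one minimizing $h$, the number of cost-1 edges. If two good-counter vertices $a$ and $c$ sit at distance 2 on a 1-path with middle vertex $b$ (which has no counters by property \ref{prop distance good counters}), one deletes the vertices $a$ and $b$, replaces the three cost-1 tour edges $\{u,a\},\{a,b\},\{b,c\}$ (where $u$ is the other neighbor of $a$, which lies on the same 1-path since $a$ is not an endpoint) by a single cost-1 edge $\{u,c\}$. This removes exactly the four good counters of $a$ and two cost-1 edges, and one checks the new distribution is still nice. Crowdedness is preserved automatically by the computation
\begin{align*}
\frac{s}{h}>\frac{12}{5}>2 \;\Rightarrow\; \frac{s-4}{h-2}>\frac{s}{h}>\frac{12}{5},
\end{align*}
so the shrunken instance contradicts the minimality of $h$; hence the minimal distribution already satisfies the distance-3 property. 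This single observation --- that excising the violation \emph{improves} the counter-to-edge ratio because $s>2h$ --- is the missing idea in your proposal: it replaces your delicate balance between dropping counters and rewiring the tour by a one-step contradiction, with no potential function or case analysis needed.
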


\begin{proof}
Let a nice and crowded distribution of counters for a tour $T$ with $h$ minimal be given, where $h$ is the number of edges with cost 1 in $T$. First, note that if a vertex $v$ has two good counters, then we may assume that it has four good counters instead of exactly two good counters or two good counters and one bad counter. It is easy to check that this assumption does not decrease the number of counters or contradict Corollary~\ref{counters conditions} since $v$ already had a good counter.

Next, we exclude the case that there are two vertices $a$ and $c$ on the same 1-path with good counters such there is a vertex $b$ with $\{a,b\},\{b,c\}\in T$. In this case by property~\ref{prop distance good counters} of Corollary \ref{counters conditions} the vertex $b$ does not have a counter. Let $u$ be the neighbor of $a$ in $T$ other than $b$. We remove the vertices $a$ and $b$ from the instance, remove the edges $\{u,a\}$, $\{a,b\}$ and $\{b,c\}$ and add the edge $\{u,c\}$ to $T$ and set its cost to 1. By property~\ref{prop endpoint bad counters} in Corollary \ref{counters conditions}, $u$ belongs to the same 1-path as $a$ in the original tour, hence the new tour does not contradict property \ref{prop endpoint bad counters}. This also implies that the number of edges with cost 1 decreased by 2 while the number of counters decreased by 4. Moreover, we did not introduce new bad counters or new vertices neighboring to a vertex with a good counter. Hence, the new tour does not contradict Corollary \ref{counters conditions}. Let the number of counters before the modification be $s$. Since the edges $\{u,a\}$, $\{a,b\}$ and $\{b,c\}$ all had cost 1 in the old tour we have $h-2\geq 1$. We conclude that the ratio $\frac{s}{h}$ does not decrease as
\begin{align*}
\frac{s}{h}>\frac{12}{5}>2 \Rightarrow 2s> 4h \Rightarrow \frac{s-4}{h-2}> \frac{s}{h}> \frac{12}{5}.
\end{align*}
This contradicts that we chose an nice and crowded distribution with $h$ minimal. Thus, the distance of two vertices $a$ and $c$ on the same 1-path each having good counters is at least 3.
\end{proof}

\begin{lemma} \label{number of good and bad counters}
If there is a nice and crowded distribution of counters, then there is a nice and crowded distribution of counters such that each 1-path with $i$ edges contains $g_i$ good and at most $b_i$ bad counters, where $b_0=g_0=0$ and for $i>0$
\begin{align*}
b_i&:=4\cdot\frac{i}{3}-2, & g_i&:= 4\cdot\frac{i}{3} & &\text{for } i\equiv 0 \mod 3\\
b_i&:=4\cdot\frac{i-1}{3}+2, & g_i&:= 4\cdot\frac{i-1}{3} & &\text{for } i\equiv 1 \mod 3\\
b_i&:=4\cdot\frac{i-2}{3}, & g_i&:=4\cdot\frac{i+1}{3} & &\text{for } i\equiv 2 \mod 3.
\end{align*}
\end{lemma}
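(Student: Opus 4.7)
The plan is to start from the normal form provided by Lemma \ref{distance of good counters}---each good-counter vertex carries exactly four good counters, and two good-counter vertices on the same 1-path are at distance at least $3$---and then relocate (or add) good counters on each 1-path to canonical positions that maximize the remaining capacity for bad counters, reading off the claimed bound as the total.

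For the good-counter count, the distance-$3$ condition together with property (\ref{prop endpoint bad counters}) forbidding good counters at endpoints limits the number of good-counter vertices on a 1-path of length $i$ to $\lfloor (i+1)/3 \rfloor$, and a direct check confirms that $4 \lfloor (i+1)/3 \rfloor = g_i$ in each residue class $i \bmod 3$. If some 1-path carries fewer than $g_i$ good counters, I would add four good counters at an empty canonical position; this strictly increases the total count and respects every property of niceness, so the distribution remains nice and crowded.

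For the bad-counter bound, I choose the canonical good positions to maximize bad capacity, then count the remaining slots. When $i = 3k$, I place good at $v_1, v_4, \ldots, v_{3k-2}$; since $v_1$ is cost-$1$ adjacent to the endpoint $v_0$, Lemma \ref{3 Opt endpoints no counter} forces both endpoints to hold no counter, and the remaining $2k-1$ non-good interior vertices each carry at most two bad, giving $b_i = 4k - 2$. When $i = 3k + 2$, I place good at $v_1, v_4, \ldots, v_{3k+1}$, which again blocks both endpoints via Lemma \ref{3 Opt endpoints no counter} and leaves $2k$ non-good interior vertices, each contributing two bad, for $b_i = 4k$. When $i = 3k + 1$, I place good at the strictly interior positions $v_2, v_5, \ldots, v_{3k-1}$; Lemma \ref{3 Opt endpoints no counter} is then not triggered, so each endpoint may carry one bad counter while the $2k$ non-good interior vertices contribute two each, for $b_i = 4k + 2$.

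The main obstacle I anticipate is carrying out the good-counter relocation in the case $i \equiv 1 \pmod 3$ while preserving niceness. Swapping a non-canonical good vertex (e.g.\ $v_1$ or $v_{3k}$) with a canonical one (e.g.\ $v_2$ or $v_{3k-1}$) can temporarily violate property (\ref{prop distance good counters}) or the endpoint constraints, and the swap may reduce the count of counters if the canonical vertex already held two bad. I would handle this by performing the swap in two substeps---first displacing the four good counters to bad at the same vertex, then placing new good at the canonical position---and verifying at each substep that niceness is preserved and that the total count does not decrease, invoking the minimality of $h$ from the proof of Lemma \ref{distance of good counters} to rule out exchanges that strictly decrease the ratio. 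Short-length base cases ($i \in \{1, 2, 3\}$) are checked directly against the formulas.
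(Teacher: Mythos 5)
Your overall plan (normal form from Lemma \ref{distance of good counters}, then per-residue-class capacity counting on each 1-path) is close in spirit to the paper's, but two steps do not go through as written. First, the bad-counter bound for $i\equiv 0 \bmod 3$ rests on Lemma \ref{3 Opt endpoints no counter}, which you are not entitled to use here: at this stage one argues about an \emph{arbitrary nice} distribution, possibly for the modified tour produced by Lemma \ref{distance of good counters}, which is not claimed to be 3-optimal and whose counters do not come from Definition \ref{counter distribution}. The only endpoint constraint available is property \ref{prop bound good counters} of Corollary \ref{counters conditions}, and it is one-sided: a good counter on the neighbour of an endpoint blocks only \emph{that} endpoint. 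With your placement of good counters at $v_1,v_4,\dots,v_{3k-2}$ on a path with $i=3k$ edges, the far endpoint $v_{3k}$ is adjacent to the non-good vertex $v_{3k-1}$, so niceness permits a bad counter there and your count becomes $4k-1=b_i+1$. Indeed no placement of $k$ good vertices on such a path shields both endpoints simultaneously (the spacing forces one of $v_1,v_{3k-1}$ to be good but not both), so for $i\equiv 0\bmod 3$ pure capacity counting under niceness gives $b_i+1$, and $g_i$ good together with $b_i+1$ bad counters genuinely coexist in a nice distribution (good at $v_2,v_5,\dots,v_{3k-1}$, one bad at $v_0$). To reach the stated $b_i$ one must argue that one may \emph{pass to} a distribution with fewer bad counters without losing crowdedness, using that bad counters are additionally limited globally by property \ref{prop bad counters}; this trade-off is the heart of the matter and is absent from your argument.

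Second, the addition/relocation step is not carried out, and it is not only the $i\equiv 1$ case that is delicate. Dropping four good counters onto an ``empty canonical position'' can violate property \ref{prop distance good counters} against good counters already sitting at non-canonical positions, and property \ref{prop bound good counters} if the position neighbours an endpoint carrying a bad counter; repairing this forces deletions of counters, and you give no accounting that the total stays above $\frac{12}{5}h$, which is what ``crowded'' requires. The appeal to ``the minimality of $h$'' is not available: that minimality was a device internal to the proof of Lemma \ref{distance of good counters} (a counterexample minimizing $h$ over modified tours), not a property of the distribution that lemma hands you, and your relocations do not change $h$ in any case; moreover the substep ``displace the four good counters to bad at the same vertex'' already loses two counters, since a slot holds only one bad counter. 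The paper avoids all of this by taking, among nice distributions in the normal form of Lemma \ref{distance of good counters}, one with the \emph{maximum} number of counters---which is automatically crowded---and then analysing what a maximizer can look like path by path; if you keep your modification-based route, you must explicitly verify after each repair that niceness holds and the counter total has not dropped, which is precisely the part the proposal leaves open.
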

\begin{proof}
We construct nice distributions of counters satisfying the properties in Lemma~\ref{distance of good counters} and having maximal number of counters. One such distribution of counters has to be crowded if there is a nice and crowded distribution of counters. Every 1-path of length 0 does not contain counters by property \ref{prop endpoint bad counters} in Corollary~\ref{counters conditions}. Therefore, it is enough to restrict to the 1-paths with $i>0$ edges. By Lemma \ref{distance of good counters} we may assume that the distance of two vertices on a 1-path each having good counters is at least 3. Since the endpoints of the 1-path do not contain good counters, the number of good counters on a 1-path with $i$ edges is at most $g_i$. To maximize the number of counters, we have to maximize the number of vertices with good counters, since every vertex can contain at most four good counters or two bad counters. For $i\equiv 0 \mod 3$ and $i\equiv 2 \mod 3$ we can either have no counters on the endpoints of the 1-path and $g_i$ good counters and at most $b_i$ bad counters or one bad counter on each of the endpoints but by property \ref{prop bound good counters} in Corollary \ref{counters conditions} only $g_i-4$ good counters and at most $b_i+2$ additional bad counters. The total number of counters is the same but the first case has the advantage that we have fewer bad counters and the total number of bad counters is bounded by property \ref{prop bad counters} in Corollary \ref{counters conditions}. For $i\equiv 1 \mod 3$ each endpoint of the 1-path can contain a bad counter such that in total we have $g_i$ good counters and at most $b_i$ bad counters. Therefore, we can assume that every 1-path with $i$ edges has $g_i$ good counters and at most $b_i$ bad counters. 
\end{proof}

Now, we show that every nice distribution of counters is not crowded. Since by Lemma~\ref{counters conditions} the distribution of counters from Definition \ref{counter distribution} is nice, we conclude that it is not crowded.

\begin{lemma} \label{3 Opt counters 1-path}
Every nice distribution of counters is not crowded. In particular, if we distribute the counters according to Definition \ref{counter distribution}, every 3-optimal tour $T$ has at most $\frac{12}{5}h$ counters, where $h$ is the number of edges with cost 1 in $T$.
\end{lemma}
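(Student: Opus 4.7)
The plan is to carry out the linear programming argument sketched before the lemma. Suppose for contradiction that a nice and crowded distribution of counters exists. By Lemma~\ref{number of good and bad counters} there is then also a nice and crowded distribution (for possibly a different tour) in which every 1-path with $i$ edges carries exactly $g_i$ good counters and at most $b_i$ bad counters. For such a distribution let $n_i$ denote the number of 1-paths with exactly $i$ edges and let $y_i$ be the total number of bad counters placed on 1-paths of length $i$. Then $h = \sum_{i \geq 1} i\, n_i$, the total number of counters equals $\sum_{i \geq 1}(g_i n_i + y_i)$, and the $y_i$ satisfy the two constraints $y_i \leq b_i n_i$ for each $i \geq 1$ (per-length bound from Lemma~\ref{number of good and bad counters}) together with $\sum_{i \geq 1} y_i \leq 4 \sum_{i \geq 1} n_i$ (property~\ref{prop bad counters} of Corollary~\ref{counters conditions}).

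The key step is to combine the two constraints on $\sum y_i$ with convex weights $\tfrac{4}{5}$ and $\tfrac{1}{5}$, obtaining
\[
\sum_{i \geq 1} y_i \;\leq\; \tfrac{4}{5}\sum_{i \geq 1} b_i n_i \;+\; \tfrac{1}{5}\cdot 4\sum_{i \geq 1} n_i \;=\; \sum_{i \geq 1}\tfrac{4 b_i+4}{5}\,n_i.
\]
Substituting this into the total counter count, the desired inequality $\sum_{i \geq 1}(g_i n_i+y_i)\leq \tfrac{12}{5}\,h$ reduces, after multiplication by $5$, to the single per-length inequality
\[
12 i - 5 g_i - 4 b_i - 4 \;\geq\; 0 \qquad \text{for every } i\geq 1.
\]

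The last step is to verify this inequality by substituting the piecewise-linear formulas for $g_i$ and $b_i$ from Lemma~\ref{number of good and bad counters} in each residue class modulo~$3$. A short direct computation shows that the left-hand side equals $4$ when $i\equiv 0 \pmod 3$ and equals $0$ when $i\equiv 1$ or $2 \pmod 3$; in particular it is always non-negative. This contradicts crowdedness and proves the first statement of the lemma. The ``in particular'' statement then follows immediately, since by Corollary~\ref{counters conditions} the distribution from Definition~\ref{counter distribution} is nice.

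The only non-routine step is the choice of the weight $\tfrac{4}{5}$ in the convex combination. It is forced by the fact that a single 1-path of length $5$ already attains the ratio $(g_5+b_5)/5 = 12/5$, so no slack is available at $i=5$; the same weight then makes the per-length inequality tight on the entire infinite family $i \equiv 1, 2 \pmod 3$, and the residue-class calculation above confirms that it also suffices (with slack $4$) for $i \equiv 0 \pmod 3$.
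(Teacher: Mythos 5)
Your proof is correct and takes essentially the same route as the paper: your convex weights $\tfrac{4}{5}$ and $\tfrac{1}{5}$ are exactly the paper's dual solution $(y_2,y_3)$, and your per-length inequality $12i-5g_i-4b_i-4\ge 0$ is the paper's dual feasibility constraint $iy_1-b_iy_2-4y_3\ge g_i$ with $y_1=\tfrac{12}{5}$ scaled by $5$, so you have simply unrolled the weak-duality step by hand, with matching residue-class computations. One nitpick in your closing aside (which does not affect the proof): at $i=5$ the combined bound is tight for \emph{every} convex weight, since $b_5=4$ coincides with the global cap for a single path; the weight $\tfrac{4}{5}$ is actually pinned down by $i=1$ together with the asymptotics of the classes $i\equiv 1,2 \pmod 3$.
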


\begin{proof}
Assume that there is a nice and crowded distribution of counters. By Lemma~\ref{number of good and bad counters} we can choose one such that the 1-paths with $i$ edges have $g_i$ good counters and at most $b_i$ bad counters. Consider the following LP:
\begin{align*}
\max \ \sum_{i\in \N_{>0}}g_ix_i+&z \\
s.t. \qquad \sum_{i\in \N_{>0}} ix_i = &h \\
z- \sum_{i\in \N_{>0}} b_ix_i\leq &0\\
z-4\sum_{i\in \N_{>0}} x_i \leq &0\\
x_i \geq &0 \qquad \forall i\in \N_{>0}
\end{align*}

This LP gives an upper bound on the number of counters in a tour with $h$ edges of cost 1. The variable $x_i$ counts the number of 1-paths with $i$ edges. Since $b_0=g_0=0$, it is enough to restrict to the case $i>0$. The variable $z$ is equal to the total number of bad counters. The first equation ensures that the total number of edges with cost 1 is $h$ while the second inequality bounds the number of bad counters by the sum of the upper bounds of bad counters on each 1-path. The third inequality ensures that the number of bad counters is at most four times the number of 1-paths with length greater than 0 by property \ref{prop bad counters} of Corollary \ref{counters conditions}. 

Note that by the first equation for any fixed $h$ only a finite number of $x_i$ are nonzero. Now, consider the dual LP:
\begin{align*}
\min \ y_1h& \\
s.t. \qquad iy_1-b_iy_2-4y_3 \geq &g_i \qquad  \forall i\in \N_{>0} \\
y_2+y_3 \geq &1 \\
y_2,y_3 \geq &0.
\end{align*}
We show that $y_1=\frac{12}{5}, y_2=\frac{4}{5}$ and $y_3=\frac{1}{5}$ is a feasible dual solution with value $\frac{12}{5}h$. Therefore, by weak duality we get $\frac{12}{5}h$ as upper bound on the primal LP. Obviously, we have $y_2+y_3=1$ and $y_2,y_3\geq 0$. Moreover, we can check the first inequality:
\begin{align*}
i\cdot\frac{12}{5}-\left(4\cdot\frac{i}{3}-2\right)\cdot\frac{4}{5}-4\cdot\frac{1}{5}=4\cdot\frac{i}{3}+\frac{4}{5}&>4\cdot\frac{i}{3} && &\text{for } i\equiv 0 \mod 3\ \\
i\cdot\frac{12}{5}-\left(4\cdot\frac{i-1}{3}+2\right)\cdot\frac{4}{5}-4\cdot\frac{1}{5}&=4\cdot\frac{i-1}{3} && &\text{for } i \equiv 1 \mod 3\ \\
i\cdot\frac{12}{5}-4\cdot\frac{i-2}{3}\cdot\frac{4}{5}-4\cdot\frac{1}{5}&=4\cdot\frac{i+1}{3} && &\text{for } i \equiv 2 \mod 3.
\end{align*}
Thus, the first inequality is also satisfied and we get the upper bound of $\frac{12}{5}h$ on the number of counters. This is a contradiction to the assumption that the distribution of counters is crowded.
\end{proof}

\begin{lemma} \label{3 Opt ratio bound}
Consider the distribution of counters according to Definition \ref{counter distribution}. Let the number of counters in the tour $T$ be at most $d\cdot h$ where $d$ is a constant and $h$ is the number of edges with cost 1 in $T$. Then, the ratio between the length of $T$ and that of the optimal tour is at most $1+\frac{d}{4+d}$.
\end{lemma}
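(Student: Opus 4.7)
The plan is to bound the number $L^*$ of cost-$1$ edges in $T^*$ and deduce the desired ratio bound from that. Writing $|T|=2n-h$ and $|T^*|=2n-L^*$ (where $n$ is the number of vertices), the target $|T|/|T^*|\le 1+\frac{d}{4+d}=\frac{4+2d}{4+d}$ is equivalent to $L^* \le \frac{2dn+(d+4)h}{4+2d}$, so everything reduces to bounding $L^*$ by this quantity.

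I partition the vertices of $T$ via the 1-path decomposition into three classes: $A$ (isolated vertices in the cost-$1$ subgraph of $T$, i.e.\ 1-paths of length $0$), $B$ (endpoints of 1-paths of length $>0$), and $C$ (interior vertices of 1-paths of length $>0$). If $L_{>0}$ denotes the number of 1-paths of length $>0$, then $|B|=2L_{>0}$ and $|C|=h-L_{>0}$. For each vertex $v$ let $k_v\in\{0,1,2\}$ be the number of cost-$1$ $T^*$-edges incident to $v$, and set $\alpha=\sum_{v\in A}k_v$, $\beta=\sum_{v\in B}k_v$, $\gamma=\sum_{v\in C}k_v$, so that $2L^* = \alpha+\beta+\gamma$. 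By Definition~\ref{counter distribution}, a cost-$1$ edge $\{u,v\}$ of $T^*$ distributes $w(u)+w(v)$ counters, with $w=2$ on $A$, $w=1$ on $B$, and $w=0$ on $C$; summing over all such edges yields total counters $=2\alpha+\beta$, and the hypothesis then gives $2\alpha+\beta\le dh$. Combining with $\beta\le 2|B|=4L_{>0}$ and $\gamma\le 2|C|=2(h-L_{>0})$ (each $k_v\le 2$) I obtain
\[
2L^* \;=\; \tfrac{1}{2}(2\alpha+\beta)+\tfrac{1}{2}\beta+\gamma \;\le\; \tfrac{dh}{2}+2L_{>0}+2(h-L_{>0}) \;=\; \tfrac{(d+4)h}{2},
\]
so $L^* \le (d+4)h/4$.

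This counter-based bound is then combined with the trivial bound $L^*\le n$. The target $\frac{2dn+(d+4)h}{4+2d}$ is exactly the convex combination $\frac{d}{d+2}\cdot n+\frac{2}{d+2}\cdot\frac{(d+4)h}{4}$, so it dominates $\min\{n,(d+4)h/4\}\ge L^*$. Plugging into $|T^*|=2n-L^*$ gives $|T^*|\ge \frac{(4+d)(2n-h)}{4+2d}=\frac{(4+d)|T|}{4+2d}$, which is the claimed ratio. I expect the only genuinely nontrivial step to be the symmetric splitting $2\alpha+\beta=\tfrac{1}{2}(2\alpha+\beta)+\tfrac{1}{2}\beta$: it releases exactly enough slack in $\beta$ for the $\beta$- and $\gamma$-contributions to combine into $2h$ with the $L_{>0}$-terms cancelling, so that the hypothesis enters only once as a clean $dh/2$. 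The convex-combination observation is what uniformly handles the two regimes where either the counter bound or the trivial bound on $L^*$ is tighter.
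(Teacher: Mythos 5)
Your proof is correct, and at its core it rests on the same counting as the paper's, with the bookkeeping inverted and a cleaner endgame. Writing $l=n-h$ and $f=n-L^*$ for the numbers of cost-2 edges in $T$ and $T^*$, your key inequality $4L^*-4h\le dh$ is exactly the paper's: the paper lower-bounds the distributed counters by $4l-4f$ (each 1-path distributes four counters when all optimal edges at its endpoints have cost 1, and each cost-2 edge of $T^*$ destroys at most four) and combines this with the hypothesis that the counters number at most $dh$; you reach the equivalent bound $L^*\le\frac{(d+4)h}{4}$ instead by per-vertex incidence accounting ($\alpha,\beta,\gamma$ with the weights $2,1,0$ coming from Definition \ref{counter distribution}) and the splitting $\frac{1}{2}(2\alpha+\beta)+\frac{1}{2}\beta+\gamma$, which is a more explicit but equivalent derivation. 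Where you genuinely diverge is the final step: the paper substitutes $h\ge\frac{4(l-f)}{d}$ into $\frac{h+2l}{h+l+f}$ and estimates the fraction directly, a manipulation that implicitly assumes $l\ge f$ (otherwise the ratio is trivially at most $1$); you instead pair the counter bound with the trivial bound $L^*\le n$ and observe that $\frac{2dn+(d+4)h}{4+2d}$ is the convex combination $\frac{d}{d+2}\,n+\frac{2}{d+2}\cdot\frac{(d+4)h}{4}$, so both regimes are handled uniformly and no sign caveat is needed. Both routes are elementary and of comparable length; yours makes the role of the weights $2,1,0$ and of the degenerate case explicit, while the paper's per-1-path phrasing is shorter.
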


\begin{proof}
Let $l$ and $f$ be the number of edges with cost 2 in $T$ and the optimal tour, respectively. If the optimal tour consists only of edges with cost 1, every 1-path distributes four counters: For every 1-path of length 0 the unique endpoint is incident to two edges of length 1 in the optimal tour and distributes four good counters. Every endpoint of the other 1-paths is incident to two edges of length 1 in the optimal tour and distributes two bad counters. Every edge of cost 2 in the optimal tour decreases the number of counters distributed by at most 4. Note that the number of 1-paths in $T$ is equal to the number of edges with cost 2 which is $l$. Therefore, the $l$ 1-paths distribute at least $4l-4f$ counters and we conclude $d h\geq 4l-4f$ or $h\geq \frac{1}{d}4(l-f)$. Note that the length of the tour $T$ is $h+2l$, while that of the optimal tour $T^*$ is $n+f=h+l+f$. Therefore, we get the upper bound on the ratio of
\begin{align*}
\frac{c(T)}{c(T^*)}&=\frac{h+2l}{h+l+f}=1+\frac{l-f}{h+l+f}\leq 1+ \frac{l-f}{\frac{1}{d}4(l-f)+l+f}= 1+ \frac{l-f}{\frac{4+d}{d}l-\frac{4-d}{d}f}\\
&\leq 1+ \frac{l-f}{\frac{4+d}{d}l-\frac{4+d}{d}f}=1+\frac{d}{4+d}.
\end{align*}
\end{proof}

\begin{theorem}
The approximation ratio of the 3-Opt algorithm for \textsc{(1,2)-TSP} is at most $\frac{11}{8}$.
\end{theorem}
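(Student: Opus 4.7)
The plan is to combine the two main lemmas already established, namely Lemma \ref{3 Opt counters 1-path} and Lemma \ref{3 Opt ratio bound}, with the right choice of constant. All the heavy lifting has already been done: the former bounds the number of counters in terms of the number of unit-cost tour edges, and the latter converts such a counter bound directly into an approximation ratio.

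Concretely, I would start from an arbitrary 3-optimal tour $T$ produced by the 3-Opt algorithm, fix an optimal tour $T^*$, and apply the counter distribution procedure from Definition \ref{counter distribution}. By Corollary \ref{counters conditions}, this distribution is nice. Hence Lemma \ref{3 Opt counters 1-path} applies and yields that the total number of counters is at most $\frac{12}{5}h$, where $h$ is the number of cost-$1$ edges of $T$. Thus we may invoke Lemma \ref{3 Opt ratio bound} with the constant $d=\tfrac{12}{5}$.

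The final step is the arithmetic of plugging $d=\tfrac{12}{5}$ into the expression $1+\tfrac{d}{4+d}$, which gives
\[
1+\frac{\tfrac{12}{5}}{4+\tfrac{12}{5}}=1+\frac{12}{20+12}=1+\frac{12}{32}=\frac{11}{8},
\]
matching the lower bound obtained in the previous subsection and thereby establishing the exact value stated in Theorem \ref{apx ratio of 3-Opt algorithm}. Since every step is already justified by prior results in this subsection, there is essentially no obstacle left; the only thing to be careful about is ensuring that the hypotheses of Lemma \ref{3 Opt ratio bound} are satisfied, i.e.\ that the counters in question are precisely those from Definition \ref{counter distribution} rather than an abstract nice distribution, which is immediate from the chain of reductions used to prove Lemma \ref{3 Opt counters 1-path}.
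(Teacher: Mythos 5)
Your proposal is correct and coincides with the paper's own proof: both simply combine Lemma \ref{3 Opt counters 1-path} (at most $\frac{12}{5}h$ counters for the distribution of Definition \ref{counter distribution}) with Lemma \ref{3 Opt ratio bound} for $d=\frac{12}{5}$, yielding $1+\frac{12/5}{4+12/5}=\frac{11}{8}$. The arithmetic and the verification that the hypotheses apply are exactly as in the paper, so there is nothing further to add.
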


\begin{proof}
By Lemma \ref{3 Opt counters 1-path}, there are at most $\frac{12}{5}h$ counters in $T$ where $h$ is the number of edges with cost 1 in $T$. By Lemma \ref{3 Opt ratio bound}, the approximation ratio is at most $1+\frac{\frac{12}{5}}{4+\frac{12}{5}}=\frac{11}{8}$. 
\end{proof}

\section{Approximation Ratio of the 3-Opt++ Algorithm} \label{sec 3 opt ++}
In the last section we showed that the approximation ratio of the 3-Opt algorithm is $\frac{11}{8}$. The 1-paths with length 0 play a central role in the analysis. Every endpoint of such a 1-path distribute at most four good counters instead of two bad counters and increase the approximation ratio. We introduce the $k$-Opt++ algorithm by adapting a concept from the $k$-improv algorithm from \cite{DBLP:journals/eccc/ECCC-TR05-069}. The $k$-Opt++ algorithm does not only search for improving $k$-moves but also $k$-moves that decrease the number of 1-paths of length 0 without increasing the cost of the tour. We show that the 3-Opt++ algorithm has a better approximation ratio of $\frac{4}{3}$ than 3-Opt.

\subsection{The $k$-Opt++ algorithm}
We first describe the $k$-Opt++ algorithm. An \emph{improving $k$-Opt++-move} is a $k$-move that is either improving or does not change the cost of the tour but decreases the number of 1-paths of length 0 in the tour. A tour is called \emph{k-Opt++-optimal} if there does not exist an improving $k$-Opt++-move. Like the $k$-Opt algorithm the $k$-Opt++ algorithm starts with an arbitrary tour $T$ and performs improving $k$-Opt++ moves until $T$ is $k$-Opt++-optimal (Algorithm \ref{k-Opt++ algorithm}).

\begin{algorithm}
\caption{$k$-Opt++ Algorithm}
\label{k-Opt++ algorithm}
 \hspace*{\algorithmicindent} \textbf{Input:} Instance of \textsc{TSP} $(K_n,c)$ \\
 \hspace*{\algorithmicindent} \textbf{Output:} Tour $T$
\begin{algorithmic}[1]
\State Start with an arbitrary tour $T$
\While{$\exists$ improving $k$-Opt++-move for $T$}  
\State Perform an improving $k$-Opt++-move on $T$
\EndWhile
\State \Return $T$
\end{algorithmic}
\end{algorithm}

Obviously, every $k$-Opt++-optimal tour is also $k$-optimal. Moreover, every iteration of the $k$-Opt++ algorithm can also be performed in polynomial time since the number of $k$-moves is polynomially bounded. Thus, the algorithm also terminates in polynomial time since the length of the initial tour is bounded by $2n$ and every step improves the length of the tour by at least 1 or decreases the number of 1-paths with length 0 by 1.

\subsection{Lower Bound on the Approximation Ratio of the 3-Opt++ Algorithm}
For every natural number $s\geq 2$ we construct an instance $I_s$ with the vertices $\{v_0,\dots, v_{6s-1}\}$ and approximation ratio $\frac{4}{3}$. For simplicity we consider the indicies modulo $6s$. Set the cost of the edges $\{\{v_{6h},v_{6h+1}\}$, $\{v_{6h+2},v_{6h+3}\}$, $\{v_{6h+3}, v_{6h+4}\}$, $\{v_{6h+4},v_{6h+5}\}$, $\{v_{6h},v_{6h+3}\}$, $\{v_{6h+2},v_{6h+5}\}, \{v_{6h+4},v_{6(h+1)+1}\} \vert h\in \Z\}$ to 1 and the cost of all other edges to 2.

The tour $T$ consists of the edges $\{\{v_i,v_{i+1}\} \vert i\in \Z \}$. The optimal tour $T^*$ consists of the edges $\{\{v_{2h+1},v_{2h}\}, \{v_{2h},v_{2h+3}\} \vert h\in \Z \}$ (Figure \ref{12TSP 3 Opt lower}).
It is easy to check that $T^*$ consists only of edges with cost 1 and is hence optimal.

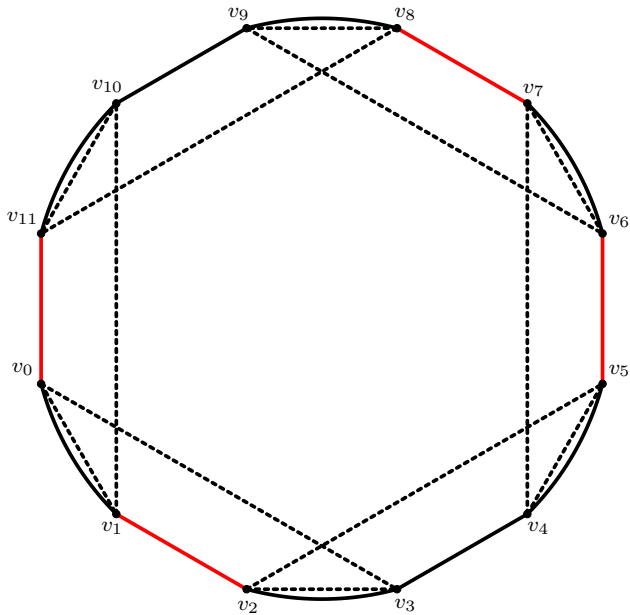
\begin{figure}[!htb]
\centering
\begin{tikzpicture}[line cap=round,line join=round,>=triangle 45,x=2.0cm,y=2.0cm]
\draw [line width=1.4pt,color=ffqqqq] (-1.3660254037844388,2.3660254037844393)-- (-1.3660254037844393,1.3660254037844402);
\draw [line width=1.4pt,color=ffqqqq] (2.366025403784439,2.3660254037844384)-- (2.366025403784439,1.3660254037844388);
\draw [line width=1.4pt,dotted] (-1.3660254037844393,1.3660254037844402)-- (-0.8660254037844393,0.5);
\draw [line width=1.4pt,color=ffqqqq] (-0.8660254037844393,0.5)-- (0,0);
\draw [line width=1.4pt,dotted] (0,0)-- (1,0);
\draw [line width=1.4pt] (1,0)-- (1.8660254037844386,0.5);
\draw [line width=1.4pt,dotted] (1.8660254037844386,0.5)-- (2.366025403784439,1.3660254037844388);
\draw [line width=1.4pt,dotted] (2.366025403784439,2.3660254037844384)-- (1.8660254037844393,3.232050807568877);
\draw [line width=1.4pt,color=ffqqqq] (1.8660254037844393,3.232050807568877)-- (1,3.7320508075688776);
\draw [line width=1.4pt,dotted] (1,3.7320508075688776)-- (0,3.7320508075688776);
\draw [line width=1.4pt] (0,3.7320508075688776)-- (-0.8660254037844384,3.2320508075688785);
\draw [line width=1.4pt,dotted] (-0.8660254037844384,3.2320508075688785)-- (-1.3660254037844388,2.3660254037844393);
\draw [line width=1.4pt,dotted] (-1.3660254037844393,1.3660254037844402)-- (1,0);
\draw [line width=1.4pt,dotted] (0,0)-- (2.366025403784439,1.3660254037844388);
\draw [line width=1.4pt,dotted] (0,3.7320508075688776)-- (2.366025403784439,2.3660254037844384);
\draw [line width=1.4pt,dotted] (1,3.7320508075688776)-- (-1.3660254037844388,2.3660254037844393);
\draw [line width=1.4pt,dotted] (-0.8660254037844393,0.5)-- (-0.8660254037844384,3.2320508075688785);
\draw [line width=1.4pt,dotted] (1.8660254037844386,0.5)-- (1.8660254037844393,3.232050807568877);
\draw [shift={(0.5,1.8660254037844397)},line width=1.4pt]  plot[domain=1.308996938995747:1.832595714594046,variable=\t]({1*1.9318516525781362*cos(\t r)+0*1.9318516525781362*sin(\t r)},{0*1.9318516525781362*cos(\t r)+1*1.9318516525781362*sin(\t r)});
\draw [shift={(0.5,1.8660254037844397)},line width=1.4pt]  plot[domain=2.356194490192345:2.879793265790644,variable=\t]({1*1.9318516525781366*cos(\t r)+0*1.9318516525781366*sin(\t r)},{0*1.9318516525781366*cos(\t r)+1*1.9318516525781366*sin(\t r)});
\draw [shift={(0.5,1.8660254037844397)},line width=1.4pt]  plot[domain=3.4033920413889422:3.9269908169872414,variable=\t]({1*1.931851652578137*cos(\t r)+0*1.931851652578137*sin(\t r)},{0*1.931851652578137*cos(\t r)+1*1.931851652578137*sin(\t r)});
\draw [shift={(0.5,1.8660254037844397)},line width=1.4pt]  plot[domain=4.4505895925855405:4.974188368183839,variable=\t]({1*1.9318516525781375*cos(\t r)+0*1.9318516525781375*sin(\t r)},{0*1.9318516525781375*cos(\t r)+1*1.9318516525781375*sin(\t r)});
\draw [shift={(0.5,1.8660254037844397)},line width=1.4pt]  plot[domain=5.497787143782138:6.021385919380436,variable=\t]({1*1.9318516525781373*cos(\t r)+0*1.9318516525781373*sin(\t r)},{0*1.9318516525781373*cos(\t r)+1*1.9318516525781373*sin(\t r)});
\draw [shift={(0.5,1.8660254037844397)},line width=1.4pt]  plot[domain=0.26179938779914874:0.7853981633974476,variable=\t]({1*1.9318516525781362*cos(\t r)+0*1.9318516525781362*sin(\t r)},{0*1.9318516525781362*cos(\t r)+1*1.9318516525781362*sin(\t r)});
\begin{scriptsize}
\draw [fill=black] (0,0) circle (1.4pt);
\draw[color=black] (0.01173979493001023,-0.120104414450016865) node {$v_2$};
\draw [fill=black] (1,0) circle (1.4pt);
\draw[color=black] (1.0563041587443374,-0.104455510123210493) node {$v_3$};
\draw [fill=black] (1.8660254037844386,0.5) circle (1.4pt);
\draw[color=black] (1.940467253208899,0.4080461065796983) node {$v_4$};
\draw [fill=black] (2.366025403784439,1.3660254037844388) circle (1.4pt);
\draw[color=black] (2.4799343830130913,1.454804818351484) node {$v_5$};
\draw [fill=black] (2.366025403784439,2.3660254037844384) circle (1.4pt);
\draw[color=black] (2.4799343830130913,2.4524224691853904) node {$v_6$};
\draw [fill=black] (1.8660254037844393,3.232050807568877) circle (1.4pt);
\draw[color=black] (1.909169444555286,3.320936659323144) node {$v_7$};
\draw [fill=black] (1,3.7320508075688776) circle (1.4pt);
\draw[color=black] (1.0563041587443374,3.8295260499443513) node {$v_8$};
\draw [fill=black] (0,3.7320508075688776) circle (1.4pt);
\draw[color=black] (-0.05085582237721537,3.8295260499443513) node {$v_9$};
\draw [fill=black] (-0.8660254037844384,3.2320508075688785) circle (1.4pt);
\draw[color=black] (-0.9232822385966721,3.3483222418950556) node {$v_{10}$};
\draw [fill=black] (-1.3660254037844388,2.3660254037844393) circle (1.4pt);
\draw[color=black] (-1.4875205335446863,2.468071373512197) node {$v_{11}$};
\draw [fill=black] (-1.3660254037844393,1.3660254037844402) circle (1.4pt);
\draw[color=black] (-1.4875205335446863,1.4626292705148871) node {$v_0$};
\draw [fill=black] (-0.8660254037844393,0.5) circle (1.4pt);
\draw[color=black] (-0.8919844299430594,0.4080461065796983) node {$v_1$};
\end{scriptsize}
\end{tikzpicture}
  \caption{The constructed tour for $s=2$. The black and red edges have cost 1 and 2, respectively. $T$ is the straight tour, the optimal tour $T^*$ is dotted.}
  \label{12TSP 3 Opt lower}
\end{figure}

From the construction it is easy to see that the cost of the tour $T$ is $8s$ while that of the optimal tour is $6s$. 

\begin{lemma} \label{3Opt lower optimal}
For $I_s$ with $s\geq 6$ the tour $T$ constructed above is 3-Opt++-optimal .
\end{lemma}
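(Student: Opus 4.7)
The plan is to reduce 3-Opt++-optimality to ordinary 3-optimality and then invoke the regularity machinery of Lemma \ref{regular k optimality}. First, I would identify the 1-paths of $T$. The cost-2 edges of $T$ are exactly $\{v_{6h+1},v_{6h+2}\}$ and $\{v_{6h+5},v_{6h+6}\}$ for $h\in\Z$, so removing them splits $T$ into the length-1 paths $v_{6h}v_{6h+1}$ and the length-3 paths $v_{6h+2}v_{6h+3}v_{6h+4}v_{6h+5}$. In particular, $T$ contains no 1-path of length $0$, so no 3-move on $T$ can strictly decrease the number of length-0 1-paths. Every improving 3-Opt++-move on $T$ must therefore be strictly cost-improving, and it suffices to prove that $T$ is 3-optimal.

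To establish 3-optimality, I would verify that the family $(I_s)_{s\in\N}$ is regular with segment length $l=6$ and segments $\{v_{6h},\dots,v_{6h+5}\}$. The vertex labeling condition is immediate, and the entire list of cost-1 edges is invariant under the index shift $v_i\mapsto v_{i+6}$, so the second condition holds. For the third condition, inspection shows that each cost-1 edge either has both endpoints in a single segment (e.g.\ $\{v_{6h},v_{6h+3}\}$ or $\{v_{6h+2},v_{6h+5}\}$) or runs between two consecutive segments (as $\{v_{6h+4},v_{6(h+1)+1}\}$ does); no cost-1 edge skips a segment. Hence any pair of vertices whose segment indices differ by more than one has cost $2$, and $(I_s)_{s\in\N}$ is regular.

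Lemma \ref{regular k optimality} applied with $k=3$ then reduces the problem to checking 3-optimality of the corresponding tour on $I_{2k}=I_6$, a fixed finite instance on $36$ vertices. I would perform the same kind of computer-assisted verification as in the proof of Lemma \ref{3 Opt T 3-optimal}, enumerating all 3-moves on $I_6$ and confirming that none of them improves the tour. Combined with the reduction above, this yields 3-optimality of $T$ for every $s\geq 6$, and together with the first paragraph gives 3-Opt++-optimality in the same range. The main obstacle I anticipate is making the regularity verification truly exhaustive, in particular double-checking that no cost-1 edge crosses a pair of non-adjacent segments; the computer enumeration on $I_6$ itself is routine and mirrors the previous lemma.
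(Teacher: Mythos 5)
Your proposal is correct and follows essentially the same route as the paper: it reduces 3-Opt++-optimality to 3-optimality by noting $T$ has no 1-paths of length $0$, verifies that $(I_s)$ is regular with segment length $l=6$ (the paper asserts this by construction; you spell out the check), and then applies Lemma \ref{regular k optimality} with $k=3$ to reduce to a computer-assisted enumeration of all 3-moves on $I_6$, exactly as in the paper's proof.
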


\begin{proof}
Since the constructed instances $(I_s)_{n\in \N}$ do not contain 1-paths of length 0, the tour $T$ is 3-Opt++-optimal if and only if $T$ is 3-optimal. By construction, the constructed family of instance $(I_s)_{s\in \N}$ is regular. Hence, by Lemma \ref{regular k optimality} it is enough to check that the constructed tour for $I_{6}$ is 3-optimal. We checked this using a self-written computer program that generated  all possible 3-moves for the tour and observed that none of them is improving. 
\end{proof}

\begin{theorem}
The approximation ratio of the 3-Opt++ algorithm for \textsc{(1,2)-TSP} is at least $\frac{4}{3}$.
\end{theorem}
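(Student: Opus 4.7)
The plan is to assemble the pieces already built; no new technical work is required. First I would invoke Lemma~\ref{3Opt lower optimal}: for every $s \geq 6$ the tour $T$ constructed on $I_s$ is $3$-Opt++-optimal, so if the $3$-Opt++ algorithm starts from $T$ it terminates immediately with $T$ as its output. In particular $T$ is a legitimate output of the algorithm on input $I_s$, and it suffices to compare $c(T)$ to the optimum $c(T^*)$.

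Next I would verify the two costs by direct counting against the edge list in the construction. The tour $T$ consists of the $6s$ cyclic edges $\{v_i,v_{i+1}\}$. For each block index $h$, four of these six consecutive edges, namely $\{v_{6h},v_{6h+1}\}$, $\{v_{6h+2},v_{6h+3}\}$, $\{v_{6h+3},v_{6h+4}\}$ and $\{v_{6h+4},v_{6h+5}\}$, are explicitly listed with cost $1$, while the remaining two, $\{v_{6h+1},v_{6h+2}\}$ and $\{v_{6h+5},v_{6(h+1)}\}$, are not in the cost-$1$ list and so have cost $2$. Summing over $h$ gives $c(T)=4s+2\cdot 2s=8s$. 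For the optimum, the explicit tour $T^*$ consists entirely of listed cost-$1$ edges (a routine check that $\{v_{2h},v_{2h+1}\}$ and $\{v_{2h},v_{2h+3}\}$ appear in the list for each residue of $h$ mod $3$), so $c(T^*)\leq 6s$; since any Hamiltonian cycle on $n=6s$ vertices has cost at least $n=6s$, equality holds.

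Combining these two facts, $T$ is a valid output of the $3$-Opt++ algorithm on $I_s$ with $c(T)/c(T^*)=8s/6s=4/3$ for every $s\geq 6$, which yields the claimed lower bound. Notably, in contrast to the $\frac{11}{8}$ lower bound for plain $3$-Opt, here the ratio is already exactly $\frac{4}{3}$ for each admissible instance, so no limit $s\to\infty$ is needed. There is no genuine obstacle in this final assembly step; all the substantive work sits in the construction itself, in the regular-family reduction of Lemma~\ref{regular k optimality}, and in the computer verification inside Lemma~\ref{3Opt lower optimal} that reduces $3$-Opt++-optimality for the whole family to the single instance $I_6$.
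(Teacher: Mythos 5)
Your proposal is correct and follows essentially the same route as the paper: cite Lemma~\ref{3Opt lower optimal} for the 3-Opt++-optimality of $T$ on $I_s$ ($s\geq 6$), count $c(T)=8s$ and $c(T^*)=6s$ from the construction, and conclude the ratio $\frac{4}{3}$. Your explicit edge-count and the observation that the ratio is exactly $\frac{4}{3}$ for each admissible $s$ (so no limit is needed) match what the paper does, just spelled out in more detail.
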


\begin{proof}
By Lemma \ref{3Opt lower optimal}, the tour $T$ for $I_s$ with $s\geq 6$ is 3-Opt++-optimal. Moreover, we know that the length of $T$ is $8s$ and that of the optimal tour $T^*$ is $6s$. Therefore, we get an approximation ratio of $\frac{8s}{6s}=\frac{4}{3}$.
\end{proof}

\subsection{Upper Bound on the Approximation Ratio of the 3-Opt++ Algorithm}
For the upper bound assume that an instance with a 3-Opt++-optimal tour $T$ is given. Let the counters be distributed as in Definition \ref{counter distribution}. Since every 3-Opt++-optimal tour is also 3-optimal, we can apply the results from the upper bound of the 3-Opt algorithm. 

\begin{lemma} \label{3Opt Single}
Every 1-path in a 3-Opt++-optimal tour $T$ having a good counter has exactly two edges. 
\end{lemma}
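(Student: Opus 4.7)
My plan is to argue by contradiction: assume the $1$-path $P$ containing the good-counter vertex $w$ has length $i \neq 2$ and exhibit an improving $3$-Opt++-move. I first note that property~(\ref{prop endpoint bad counters}) of Corollary~\ref{counters conditions} forbids endpoints of any $1$-path from holding good counters, so $w$ must be interior to $P$; this already excludes $i \in \{0,1\}$ since in both cases every vertex of $P$ is an endpoint. It therefore suffices to rule out $i \geq 3$.

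For $i \geq 3$, my construction will mirror the one in Lemma~\ref{3 Opt endpoints no counter} (see Figure~\ref{3-Opt++ Moves}) but uses only a $2$-move. Since $w$ has a good counter, Definition~\ref{counter distribution} supplies a vertex $u$ forming a $1$-path of length $0$ with $\{u,w\}\in T^*$ and $c(u,w)=1$; I let $v_1,v_2$ be the $T$-neighbors of $u$, both joined by cost-$2$ edges. Because $P$ has length at least $3$ and $w$ is interior, at least one of the two $T$-neighbors of $w$ on $P$ is itself interior, and I call it $p$. By swapping the roles of $v_1$ and $v_2$ if necessary, I can arrange $p$ to lie on the side of $u$ for which the $2$-move that removes $\{u,v_2\}$ and $\{p,w\}$ and inserts $\{u,w\}$ and $\{p,v_2\}$ reconnects the two resulting tour-paths into a single Hamiltonian cycle.

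I then split on the cost of the new edge $\{p,v_2\}$. The cost change of the move is $c(u,w)+c(p,v_2)-c(u,v_2)-c(p,w) = c(p,v_2)-2 \leq 0$. If $c(p,v_2)=1$ it is strictly improving and contradicts the $3$-optimality of $T$. If $c(p,v_2)=2$ the move is cost-neutral, and I would finish by checking that it strictly decreases the number of $1$-paths of length $0$: the vertex $u$ loses that status since its new incident edge $\{u,w\}$ has cost $1$; the interior vertex $p$ keeps its other cost-$1$ tour edge, so it becomes merely an endpoint of a longer $1$-path rather than a fresh length-$0$ path; and $v_2$, $w$, and every other vertex retain the same set of incident cost-$1$ edges. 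The net change is $-1$, contradicting $3$-Opt++-optimality.

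I expect the main obstacle to be making the two geometric choices consistently: the tour orientation (which neighbor of $u$ plays the role of $v_2$) and the identity of $p$ must be fixed simultaneously so that the $2$-move is a valid tour \emph{and} $p$ is an interior vertex of $P$. Choosing an endpoint of $P$ for $p$ would spawn a new $1$-path of length $0$ at that endpoint in the cost-neutral subcase and spoil the accounting; the hypothesis $i \geq 3$ is precisely what guarantees that an interior neighbor of $w$ is available.
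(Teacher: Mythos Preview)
Your argument is correct and follows essentially the same route as the paper: exclude $w$ from being an endpoint (the paper does this via Lemma~\ref{3Opt 1-edge connecting endpoints}, you via Corollary~\ref{counters conditions}\ref{prop endpoint bad counters}), pick a neighbor $p$ of $w$ that is not an endpoint of the $1$-path, relabel $v_1,v_2$ so that the $2$-move removing $\{u,v_2\},\{p,w\}$ and inserting $\{u,w\},\{p,v_2\}$ is a valid tour, and observe that the move is either strictly improving or cost-neutral with one fewer $1$-path of length~$0$. Your worry about the ``geometric choices'' is resolved exactly as in the paper---since $v_1$ and $v_2$ are interchangeable, once $p$ is fixed you may always relabel so that $p$ lies on the $v_1$-side of the chord $\{u,w\}$; the case split on $c(p,v_2)$ is a harmless elaboration of what the paper phrases in one line.
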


\begin{proof}
Let the good counter be assigned to the vertex $w$ from the 1-path of length 0 consisting of the vertex $u$ which is incident to the edges $\{v_1,u\}$ and $\{u,v_2\}$ of cost 2 in $T$. Then $\{u,w\}$ has cost 1 and assume that $w$ is not the internal vertex of a 1-path of length 2. By Lemma \ref{3Opt 1-edge connecting endpoints} $w$ cannot be an endpoint of a 1-path. Hence, there is an edge $\{p,w\}\in T$, $c(p,w)=1$ such that $p$ is not endpoint of a 1-path. W.l.o.g.\ assume that $p$ lies on the same side of $\{u,w\}$ as $v_1$ (Figure \ref{3-Opt++ Moves}). Now, we can replace the edges $\{u,v_2\}$ and $\{p,w\}$ by $\{v_2,p\}$ and $\{u,w\}$. 
As 
\begin{align*}
c(u,v_2)+ c(p,w)= 2+1\geq c(v_2,p)+c(u,w),
\end{align*}
after the 2-move the cost of $T$ does not increase. Moreover, we have one 1-path of length 0 less than before, namely $u$. Since $p$ is not endpoint of a 1-path we did not create new 1-paths of length 0, contradicting the 3-Opt++-optimality of $T$.
\end{proof}

\begin{lemma} \label{3Opt 1-paths points}
Every 1-path with $x$ edges in a 3-Opt++-optimal tour $T$ has at most $2x$ counters.
\end{lemma}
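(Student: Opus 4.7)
The plan is to do a case analysis on the length $x$ of the 1-path, using Lemma \ref{3Opt Single} to pin down when good counters can appear at all, and then use the per-vertex slot bound from property 1 of Corollary \ref{counters conditions} together with properties 3 and 4 to bound each vertex's contribution. The key observation is that by Lemma \ref{3Opt Single}, any 1-path carrying a good counter must have exactly two edges, so for $x \ne 2$ the 1-path contains only bad counters and the bound reduces to a slot-counting argument.

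First I would dispose of the small cases. For $x=0$ the 1-path is a single vertex which is its own endpoint, so by property 3 of Corollary \ref{counters conditions} it carries no counters, giving $0 \le 2x$. For $x=1$ both vertices are endpoints, so by property 3 each has at most one bad counter and no good counters, giving a total of at most $2 = 2x$. For $x=2$ the 1-path has two endpoints and one internal vertex $w$. If $w$ carries at least one good counter, then by property 4 applied to both edges $\{p,w\}, \{w,q\}\in T$ of cost $1$, neither endpoint has any counter, so the total is at most the $4$ counters fitting into $w$'s two slots. If $w$ carries no good counter, then $w$ has at most $2$ bad counters (one per slot) and each endpoint has at most $1$ bad counter, again at most $4 = 2x$ in total.

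For the main case $x \ge 3$, Lemma \ref{3Opt Single} implies that no vertex of the 1-path carries a good counter, so every counter on the 1-path is bad. By property 1 of Corollary \ref{counters conditions} each internal vertex has at most $2$ bad counters (one per slot), and by property 3 each of the two endpoints has at most $1$ bad counter. Since the 1-path has $x-1$ internal vertices, the total number of counters is at most
\[
2(x-1) + 2 \cdot 1 \;=\; 2x,
\]
as claimed.

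The only step that really uses the 3-Opt++ hypothesis beyond 3-optimality is the appeal to Lemma \ref{3Opt Single}, which rules out good counters on 1-paths of length $\ne 2$; without it, a long 1-path could in principle host internal vertices with four good counters each and the bound $2x$ would fail. Everything else is a transparent per-vertex accounting, so I do not expect any obstacle beyond being careful that the mixed slot configurations in the $x=2$ case (one good slot, one bad slot on $w$) are still covered by the uniform bound of $4$ counters on $w$ combined with the forced emptiness of the endpoints given by property 4.
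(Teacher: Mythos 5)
Your proof is correct and follows essentially the same route as the paper: both arguments use Lemma \ref{3Opt Single} to confine good counters to 1-paths with exactly two edges, Corollary \ref{counters conditions} (slot bound and endpoint restriction) to charge at most two bad counters per internal vertex and one per endpoint, and Lemma \ref{3 Opt endpoints no counter} (your property 4) to empty the endpoints when the internal vertex carries a good counter. The only difference is cosmetic: you split cases by the value of $x$, while the paper splits by whether the 1-path carries a good counter.
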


\begin{proof}
For $x=0$ the 1-path consists of one vertex and by Corollary \ref{3 Opt at most one counter endpoint} it cannot have any counters. For $x\geq 1$ every 1-path with $x$ edges has two endpoints and $x-1$ internal vertices. If the 1-path does not have a good counter, then each of the inner vertices has two incident edges in the optimal tour and can have at most two bad counters. By Corollary \ref{3 Opt at most one counter endpoint} the two endpoints can each have at most one counter. Therefore, the 1-path can have in total at most $2x$ counters. In the case that the 1-path does have a good counter by Lemma \ref{3Opt Single} the 1-path must have exactly two edges and there are at most $2\cdot 2=4$ good counters on the internal vertex. Moreover, by Lemma \ref{3 Opt endpoints no counter} the endpoints of this 1-path cannot have any counters. Thus, this 1-path can have at most $4$ counters.
\end{proof}

\begin{theorem}
The approximation ratio of the 3-Opt++ algorithm for \textsc{(1,2)-TSP} is at most $\frac{4}{3}$.
\end{theorem}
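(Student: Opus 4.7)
The plan is to combine Lemma \ref{3Opt 1-paths points} with Lemma \ref{3 Opt ratio bound}, exactly mirroring the upper-bound argument for 3-Opt but with a tighter per-1-path bound on counters that is made possible by the $k$-Opt++ condition. The key improvement over the 3-Opt analysis is that we no longer need the full LP/dual machinery of Lemma \ref{3 Opt counters 1-path}, because Lemma \ref{3Opt 1-paths points} already gives a clean linear bound on the number of counters supported by each 1-path.

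First I will distribute counters according to Definition \ref{counter distribution}. Since every 3-Opt++-optimal tour is 3-optimal, all of the structural lemmas (in particular Corollary \ref{counters conditions} and Lemma \ref{3Opt 1-paths points}) apply. Summing Lemma \ref{3Opt 1-paths points} over all 1-paths gives that the total number of counters in $T$ is at most
\begin{equation*}
\sum_{\text{1-paths } P} 2 \cdot (\text{number of edges of } P) \;=\; 2h,
\end{equation*}
where $h$ denotes the number of cost-1 edges in $T$, using the fact that each cost-1 edge lies in exactly one 1-path. Thus the counter distribution satisfies the hypothesis of Lemma \ref{3 Opt ratio bound} with constant $d = 2$.

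Plugging $d=2$ into Lemma \ref{3 Opt ratio bound} yields
\begin{equation*}
\frac{c(T)}{c(T^*)} \;\leq\; 1 + \frac{2}{4+2} \;=\; \frac{4}{3},
\end{equation*}
which is the desired upper bound. There is essentially no obstacle here beyond observing the summation identity $\sum_P x_P = h$; all the real work has already been carried out in Lemma \ref{3Opt Single} and Lemma \ref{3Opt 1-paths points}, where 3-Opt++-optimality was used to rule out a 1-path of length $\neq 2$ supporting any good counter, thereby preventing the pathological situation in which a single internal vertex could collect four good counters on a short 1-path.
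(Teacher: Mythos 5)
Your proposal is correct and follows exactly the paper's argument: sum the per-1-path bound of Lemma \ref{3Opt 1-paths points} to get at most $2h$ counters, then apply Lemma \ref{3 Opt ratio bound} with $d=2$ to obtain the ratio $1+\frac{2}{4+2}=\frac{4}{3}$. The paper's own proof is precisely this two-line combination, so there is nothing to add.
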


\begin{proof}
By Lemma \ref{3Opt 1-paths points}, we distributed at most $2h$ counters where $h$ is the number of edges with cost 1 in $T$. Hence, by Lemma \ref{3 Opt ratio bound} the approximation ratio is at most $1+\frac{2}{4+2}=\frac{4}{3}$.
\end{proof}

\bibliographystyle{plain}

\end{document}